\title{Two-sided Quantum Amplitude Amplification and Exact-Error Algorithms}
\author{Debajyoti Bera \thanks{IIIT-Delhi, New Delhi, India. Email: {\tt
dbera@iiitd.ac.in}}}
\newcommand{\ket}[1]{|#1\rangle}
\newcommand{\bra}[1]{\langle #1 |}
\newcommand{\xor}{\oplus}
\def\tensor{\otimes}
\newcommand{\braket}[2]{\langle #1 | #2 \rangle}
\newcommand{\ketbra}[2]{\ket{#1}\bra{#2}}
\newcommand{\C}{\mathcal{C}}
\renewcommand{\H}{\mathcal{H}}
\newcommand{\I}{\imath}
\newtheorem{theorem}{Theorem}
\newtheorem{lemma}{Lemma}
\newtheorem{definition}{Definition}
\newtheorem{corollary}{Corollary}
\newcommand{\RQP}{\ensuremath{\mathbf{RQP}}\xspace}
\newcommand{\EQP}{\ensuremath{\mathbf{EQP}}\xspace}
\renewcommand{\P}{\ensuremath{\mathbf{P}}\xspace}
\newcommand{\BPP}{\ensuremath{\mathbf{BPP}}\xspace}
\newcommand{\RP}{\ensuremath{\mathbf{RP}}\xspace}
\newcommand{\BQP}{\ensuremath{\mathbf{BQP}}\xspace}
\newcommand{\ERP}{\ensuremath{\mathbf{ERP}}\xspace}
\newcommand{\ERQP}{\ensuremath{\mathbf{ERQP}}\xspace}
\newcommand{\EBPP}{\ensuremath{\mathbf{EBPP}}\xspace}
\newcommand{\EBQP}{\ensuremath{\mathbf{EBQP}}\xspace}
\newcommand{\pair}[2]{\langle #1, #2 \rangle}
\newcommand{\A}{\mathcal{A}}
\newcommand{\BB}{\mathcal{B}}
\newcommand{\Q}{\mathcal{Q}}
\newcommand{\Proj}{\mathcal{P}}
\newcommand{\llangle}{\big\langle}
\newcommand{\rrangle}{\big\rangle}
\renewcommand{\S}{\mathcal{S}}
\newcommand{\half}{\frac{1}{2}}
\newcommand{\quarter}{\frac{1}{4}}
\begin{document}
\maketitle

\begin{abstract}
Amplitude amplification is a central tool used in Grover's quantum search algorithm and has been used in various forms in numerous quantum algorithms since then. It has been shown to completely eliminate one-sided error of quantum search algorithms where input is accessed in the form of black-box queries. We generalize amplitude amplification for two-sided error quantum algorithm for decision problems in the familiar form where input is accessed in the form of initial states of quantum circuits and where arbitrary projective measurements may be used to ascertain success or failure.  This generalization allows us to derive interesting applications of amplitude amplification for distinguishing between two given distributions based on their samples, detection of faults in quantum circuits and eliminating error of one and two-sided quantum algorithms with exact errors.
\end{abstract}

\section{Introduction}\label{section:intro}
The motivation behind this work is to investigate the characteristics of quantum
computation when viewed as randomized algorithms. It is known that quantum
amplitude amplification, the key technique underlying Grover's unordered
search algorithm, is able to reduce and even eliminate error of one-sided quantum
black-box algorithms for search problems~\cite{BHMT}. We explored that direction further for
two-sided error algorithms for decision problems based on
the key observation that quantum algorithms appear to be
better at distinguishing between two given probability distributions compared to
classical randomized algorithms.

Suppose we are given a biased coin whose
distribution is either
$\mu_1 = \langle 1/3, 2/3 \rangle$ or $\mu_2 = \langle 2/3, 1/3 \rangle$.
A classical problem of probabilistic classification is to determine the
distribution of the coin by tossing it several times.
Various techniques exist like Bayesian classification and
maximum likelihood estimation, all of which aim to minimize some kind of error
that is inherent in such a probabilistic inference. But it is not believed to be
possible to confidently classify a distribution without any error. This is
true even if $\mu_1 = \langle 0,1\rangle$ instead.

However, such classification is possible when the distributions come
from a {\em quantum system}, our definition of a quantum source of
random samples.
We define a quantum system (QS) as a combination of a quantum circuit
$C$, an input to the circuit $\ket{\psi}$ and
a two-outcome projective measurement operator $\Proj=\langle P_E, I -
P_E \rangle$ (two outcomes will be {\em always labeled as $E$ and $F$} for
convenience) and denote it by $\langle \ket{\psi}, C, \Proj \rangle$. If we are given an actual
instance of a QS and we {\em
apply the circuit on the input followed by measurement using the projective
operator}, we will obtain a
sample in $\{E,F\}$ from the output probability distribution $\langle p_E, 1-p_E
\rangle$ where $p_E$
denotes the probability of observing outcome $E$ when
$C\ket{\psi}$ is measured using $\Proj$.

The quantum version of the above question of classifying between $\mu_1$ and
$\mu_2$ becomes this:
given an
instance $\Q$ which can be either a quantum system $\Q_1$ with output distribution $\mu_1$ or
QS $\Q_2$ with output distribution $\mu_2$, can we confidently
figure out if $\Q$ is $\Q_1$ or $\Q_2$ (in other words, determine the actual distribution
of $\Q$) by using $\Q$ in a black-box manner? Assume that both $\Q_1$ and$\Q_2$ involve the same
number of qubits and the same set of outcomes ($E$ and $F$). This is analogous to asking if two or more distinct
distributions (over same support of two elements) can be distinguished
without any probability of error. Even though classical techniques cannot
identify the exact distribution from the sample distribution without any
error, 
we show that it is possible to do so for
distributions of quantum systems.

\begin{theorem}\label{theorem:application}
    Given a quantum system $\Q = \langle \ket{\psi}, C, \Proj \rangle$ whose output distribution can either be
    $\pair{\delta}{1-\delta}$ or $\pair{\epsilon}{1-\epsilon}$ for some $0 \le \delta
	< \epsilon \le 1$, there is a quantum circuit $C'$ which can determine
    the output distribution of $\Q$ without any probability of error. $C'$ takes
    $\ket{\psi}$ as input, makes repeated calls to $C$, $C^\dagger$ and
    employs gates that depend upon operators of $\Proj$ and $\ket{\psi}$.
\end{theorem}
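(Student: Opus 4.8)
The plan is to construct $C'$ so that, after running it on $\ket\psi$ and measuring with $\Proj$, the outcome is deterministically $F$ when the true distribution is $\pair{\delta}{1-\delta}$ and deterministically $E$ when it is $\pair{\epsilon}{1-\epsilon}$. The two output distributions would then be $\pair{0}{1}$ and $\pair{1}{0}$, whose supports are disjoint, so a single sample identifies the distribution with zero error. I first recall the standard two--dimensional picture: writing $p_E=\sin^2\theta$ and decomposing $C\ket\psi=\sin\theta\,\ket{\phi_E}+\cos\theta\,\ket{\phi_F}$ with $\ket{\phi_E}$ in the image of $P_E$ and $\ket{\phi_F}$ in its complement, the Grover-type iterate $Q=-C S_0 C^\dagger S_E$, where $S_0=I-2\ketbra\psi\psi$ and $S_E=I-2P_E$, acts on the plane spanned by $\ket{\phi_E},\ket{\phi_F}$ as a rotation by $2\theta$, and a phase-tunable variant realizes a general $SU(2)$ element. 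The feature that makes exact distinguishing possible (and that evades the classical impossibility) is that $C$ is itself the black box being interrogated: in the two scenarios $C'$ instantiates two \emph{different} unitaries, so there is no inner-product obstruction to mapping the two non-orthogonal inputs to orthogonal outputs.

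Stage A reduces the two-sided instance to a one-sided one. Using the known value $\delta$, hence $\theta_\delta$, I build a fixed sub-circuit $V_A$ consisting of several plain iterates of $Q$ followed by one iterate with a tuned phase, calibrated exactly as in the exact amplitude amplification of~\cite{BHMT} but targeting the bad subspace rather than the good one, so that in the $\delta$-scenario the new input $\ket{\psi'}=V_A\ket\psi$ satisfies $P_E\ket{\psi'}=0$, i.e.\ $p'_E=0$ exactly. The same gate sequence, run in the $\epsilon$-scenario, produces a state with $p'_E=\sin^2\alpha$ for some angle $\alpha$ determined by $\theta_\delta$ and $\theta_\epsilon$; I must verify $\alpha\not\equiv 0\ (\mathrm{mod}\ \pi)$ so that $p'_E>0$. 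When $\delta=0$ this stage is vacuous ($V_A=I$) and $p'_E=\epsilon>0$ trivially.

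Stage B amplifies the survivor. I rebuild the amplification operator around the new input, $Q'=-V_A S_0 V_A^\dagger S_E$, which is again expressible through $C,C^\dagger,\ket\psi$ and $P_E$. The key structural fact is that in the $\delta$-scenario $p'_E=0$ makes $\ket{\psi'}$ a genuine fixed point of $Q'$ (indeed $S_E\ket{\psi'}=\ket{\psi'}$ and $-V_A S_0 V_A^\dagger\ket{\psi'}=\ket{\psi'}$), so every further iterate keeps the state in the bad subspace and the final measurement yields $F$ with certainty. In the $\epsilon$-scenario $p'_E=\sin^2\alpha>0$, and one more application of exact one-sided amplification, now calibrated to $\alpha$, rotates the amplitude to exactly $1$, so the measurement yields $E$ with certainty; when $\epsilon=1$ this stage is vacuous. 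Composing, $C'=(\text{Stage B})\circ V_A$ gives the disjoint-support behaviour claimed.

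The main obstacle is the non-vanishing guarantee of Stage A: after calibrating the exact de-amplification to $\theta_\delta$, I need that it does not accidentally also send the $\epsilon$-state into the bad subspace. I would handle this by using the minimal-length exact de-amplification, for which the good amplitude after $V_A$ is a low-degree trigonometric function of the underlying angle with controllably few zeros, and then argue---using only $\theta_\delta<\theta_\epsilon$---that $\theta_\epsilon$ cannot coincide with the single calibrated zero at $\theta_\delta$. A secondary technical point, which I would discharge by appealing to the exact-amplification guarantee of~\cite{BHMT}, is that both ``rotate exactly to the bad subspace'' and ``rotate exactly to the good subspace'' are realizable by finite circuits with a single tuned phase rather than merely approximable; together with the elementary checks that $S_0$ and $S_E$ are implementable from $\ket\psi$ and $P_E$ and that the relevant invariant subspace is two-dimensional, this completes the argument.
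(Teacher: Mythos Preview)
Your proposal is essentially the paper's own two–stage argument, carried out in the opposite order: the paper first applies exact amplification calibrated to $\epsilon$ (driving the $\epsilon$-case to probability~$1$ while the $\delta$-case lands at some $\delta'<1$), swaps the outcomes, and then applies a second exact amplification to send $1-\delta'$ to $1$ while $0$ stays fixed; you instead first drive the $\delta$-case to $0$ and then amplify the surviving $\epsilon$-case to $1$. Both routes are the same reduction ``two-sided $\to$ one-sided $\to$ zero-error'' via exact amplitude amplification, and both rely on the same fixed-point observation that $p=0$ is preserved by every Grover iterate.

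The one place your write-up is a bit optimistic is the non-vanishing guarantee in Stage~A. Your claim that the minimal-length exact de-amplification has a \emph{single} zero at $\theta_\delta$ is not true in general: when $1-\delta$ is small (equivalently $\delta$ is close to~$1$) the de-amplification requires many iterates, and the resulting amplitude as a function of the underlying angle is a trigonometric polynomial of correspondingly high degree with several zeros in $(0,\pi/2)$, so ``$\theta_\delta<\theta_\epsilon$'' alone does not exclude a collision. The paper faces the mirror issue (it must ensure $\delta'<1$ after the first stage) and simply asserts it as part of the Separability Lemma; a clean way to close the gap in either ordering is to note that when the problematic stage needs more than one iterate one can instead interleave with the ancilla trick of Section~3.3, which multiplicatively rescales both probabilities and thus cannot create an accidental coincidence.
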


The core technique is once again, {\em quantum amplitude amplification}.
It can be thought of as a quantum analog of repeated
trials used in randomized algorithms for reducing mis-classification error.
It is the workhorse
behind Grover's famous quantum unordered search algorithm~\cite{Grover1996} and was later
shown to be also applicable to the Deutsch-Jozsa
problem~\cite{BeraDJ2015}.
It appears that quantum algorithm designers simply
cannot wave it enough; it is applicable to almost any search
problem to yield a surprising improvement, usually quadratic, over classical algorithms. 
Since its inception, amplitude amplification have been used, either directly
or in the form of Grover's search algorithm for a vast range of problems like minimum of an unordered
array~\cite{Durr96}, minimum spanning tree~\cite{Durr2004} and even
clustering~\cite{Aimeur2007}.
Nevertheless, we feel that the technique still has a long way to go, especially,
when used in a non-blackbox manner.

%
%
The most generalized and popular version of this technique was given by
Brassard et al.
\begin{theorem}[Exact amplitude amplification~\cite{BHMT}]\label{theorem:bhmt}
    Consider a Boolean function $\Phi: X \to \{0,1\}$ that partitions a set $X$
    between its {\em good} (those which $\Phi$ evaluates to 1) and {\em bad} (those
    which evaluate to 0) elements. Consider also a quantum algorithm that uses
    no measurements and uses oracle gates for computing $\Phi$
    such that $C\ket{0}$ is quantum superposition of the elements of $X$ and let
    $a > 0$ denote the success probability that a good element is observed if
    $C\ket{0}$ is measured (in the standard basis).
    There exists a quantum circuit (that depends upon $a$) which finds a
    good solution with certainty using at most $\Theta(1/\sqrt{a})$ applications of $C$ and
    $C^\dagger$.
\end{theorem}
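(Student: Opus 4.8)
The plan is to confine the dynamics to a two-dimensional invariant subspace and exploit that a product of two reflections is a rotation. First I would decompose the output state as $C\ket{0} = \sin\theta\,\ket{G} + \cos\theta\,\ket{B}$, where $\ket{G}$ and $\ket{B}$ are the normalised projections of $C\ket{0}$ onto the spans of the good and the bad basis elements and $\theta := \arcsin\sqrt{a}$, so that $\sin^2\theta = a$. The two relevant reflections are $S_\Phi = I - 2\sum_{x\,:\,\Phi(x)=1}\ketbra{x}{x}$, which flips the phase of the good elements and costs a single query to the oracle for $\Phi$, and $S_0 = I - 2\ketbra{0}{0}$, the reflection about the input $\ket{0}$.

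Next I would define the amplitude-amplification operator $\Q = -\,C\,S_0\,C^\dagger\,S_\Phi$ and verify by a direct $2\times 2$ computation that $\Q$ preserves the subspace $\operatorname{span}\{\ket{G},\ket{B}\}$ and acts on it as a rotation by $2\theta$. A short induction then yields $\Q^m\,C\ket{0} = \sin\big((2m{+}1)\theta\big)\,\ket{G} + \cos\big((2m{+}1)\theta\big)\,\ket{B}$, so that measuring $\Q^m C\ket{0}$ in the standard basis returns a good element with probability $\sin^2\big((2m{+}1)\theta\big)$. Certainty thus amounts to choosing an integer $m$ with $(2m{+}1)\theta = \pi/2$, and each application of $\Q$ spends one use of $C$ and one of $C^\dagger$.

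The main obstacle is exactness, since for generic $a$ no integer $m$ solves $(2m{+}1)\theta = \pi/2$ and the naive iteration overshoots or undershoots the target. I would resolve this in the standard manner by fixing $m := \lceil \pi/(4\theta) - \half \rceil$, which guarantees that $\theta' := \pi/\big(2(2m{+}1)\big) \le \theta$, and then \emph{diluting} the success probability from $a$ down to $a' := \sin^2\theta' \le a$. Concretely I would append one ancilla qubit and replace $C$ by $C' = C \otimes R$, where $R$ is the fixed single-qubit rotation with $R\ket{0} = \cos\beta\,\ket{0} + \sin\beta\,\ket{1}$, and redefine the good event to be an original good element together with the ancilla in state $\ket{0}$. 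Choosing $\beta$ with $\cos^2\beta = a'/a$ makes the redefined success probability exactly $a'$, while the modified oracle (testing the original $\Phi$ together with the ancilla) costs no extra call to $C$. Running the rotation argument above on $C'$ for exactly $m$ steps then has $(2m{+}1)\theta' = \pi/2$ and lands on a good state with certainty.

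Finally I would bound the cost. From the elementary inequalities $\sqrt{a} \le \theta \le (\pi/2)\sqrt{a}$ (the lower one because $\arcsin x \ge x$ and the upper because $\arcsin$ is convex on $[0,1]$), the chosen $m = \Theta(1/\theta) = \Theta(1/\sqrt{a})$, and each of the $m$ iterations invokes $C$ and $C^\dagger$ once, so the whole procedure makes $\Theta(1/\sqrt{a})$ calls. I expect the only genuinely delicate point to be the exactness step of the previous paragraph — in particular verifying that the ceiling choice of $m$ forces $\theta' \le \theta$, so that a dilution, which can only lower the success probability, indeed suffices to reach the target angle — whereas the rotation-in-a-plane computation and the counting are routine.
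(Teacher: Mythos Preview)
Your proof is correct and is essentially the original argument of Brassard~et~al.\ that the paper is citing. Note, however, that the paper does not itself prove Theorem~\ref{theorem:bhmt}: it is quoted from~\cite{BHMT} as background, and the paper's own contribution is the more general Theorem~\ref{theorem:aa-two-sided}. The closest the paper comes to re-deriving this statement is the machinery of Section~\ref{section:proof}, and that machinery differs from your argument in one interesting respect.

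You achieve exactness by \emph{diluting}: you append an ancilla and a fixed rotation so that the effective success probability drops from $a$ down to $a'=\sin^2\theta'$ with $\theta'=\pi/\big(2(2m{+}1)\big)$, after which $m$ applications of the standard Grover operator (both phases equal to $\pi$) land exactly on the good subspace. The paper instead achieves exactness by \emph{tuning the phases}: it iterates the Grover operator with phases $\theta=\alpha=\pi$ (recursively, but Lemma~\ref{lemma:c_k_induction} shows this unfolds to ordinary iteration) until the success probability enters $[1/4,1/2]$, and then performs one final iteration with the non-trivial phases $\theta=\alpha=\arccos(1-1/(2p))$ from Table~\ref{table:delta} to hit probability~$1$ exactly, with no ancilla needed in this regime. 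Both routes cost $\Theta(1/\sqrt{a})$ calls; your dilution trick is the one in the cited reference, while the paper's phase-adjustment variant follows H{\o}yer~\cite{Hoyer2000}.
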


This theorem is highly versatile as it is. However, for our applications we
require further generalizations. For example, we are interested in not only
one-sided, but also two-sided error algorithms. We also want to apply it to
algorithms which are measured not necessarily in the standard basis. Lastly,
we want algorithms which act on non-$\ket{0}$ input states, specifically, input
states that correspond to the input $\Phi$, suitably encoded -- this is similar to classical Boolean circuits without
oracle gates.
Lastly, for the results of this paper we stick to only
decision versions of the above theorem (though our results could be extended to
circuits that output some solution).
The following theorem is our version of Theorem~\ref{theorem:bhmt} with the constraint that the probability $a$ is fixed for every
possible $\Phi$ (condition of {\em exactness}).

\begin{theorem}[Decision version of generalized exact amplitude
    amplification]\label{theorem:aa-two-sided}
Consider a Boolean function $\Phi: X \to \{0,1\}$ that partitions a set $X$
between its {\em good} (those which $\Phi$ evaluates to 1) and {\em bad} (the
rest of $X$) elements.
Suppose $C$ is a quantum algorithm (or circuit) that uses no measurement and
decides $\Phi$ with two-sided exact error $(\delta,\epsilon)$ for some $\delta <
\epsilon$. That is, the probability of error when $C$ is given a good $x \in
X$ is
{\em exactly} $\epsilon$ and when $x$ is bad is {\em exactly} $\delta$.
Here success and error is determined upon measurement of the output state of
$C$ by any projective measurement with two outcomes.
There exists a quantum circuit $C'$ that
calls $C$ and $C^\dagger$, uses the same input as that of $C$ (maybe with 
ancill\ae), is measured using an
extension of the measurement operator for $C$ and decides $\Phi$ with certainty,
\end{theorem}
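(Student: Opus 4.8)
The plan is to recast the problem in the usual two-dimensional geometry of amplitude amplification, but to track the \emph{good} and \emph{bad} inputs as separate sectors, and then to remove the error in two stages: first collapse the bad sector onto one measurement pole, which turns the two-sided promise into a one-sided one, and then apply exact one-sided amplification to the good sector. First I would fix an input and write $C\ket{\psi} = \sin\theta_x\,\ket{E_x} + \cos\theta_x\,\ket{F_x}$, where $\ket{E_x}$ and $\ket{F_x}$ are the normalized projections of $C\ket{\psi}$ into the ranges of $P_E$ and $I-P_E$, so that $\sin^2\theta_x$ is the probability of outcome $E$. By exactness this probability takes only two values; call the corresponding angles $\theta_1$ (good inputs) and $\theta_0$ (bad inputs). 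Because $\delta < \epsilon$ these angles are distinct and, crucially, both known in advance. Following Theorem~\ref{theorem:bhmt} I would define the generalized iterate $Q = -\,C\,S_\psi\,C^\dagger\,S_E$, where $S_E = I - 2P_E$ is the reflection determined by the measurement and $S_\psi = 2\ketbra{\psi}{\psi} - I$ is the reflection about the input (a gate depending on $\ket{\psi}$, exactly the non-blackbox ingredient the statement permits). A standard computation then shows $Q$ preserves the plane $\mathrm{span}\{\ket{E_x},\ket{F_x}\}$ and acts on it as a rotation by $2\theta_x$.

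Here is the crux, and what I expect to be the main obstacle. The rotation angle produced by $Q$ is input-dependent: one iterate turns the good state by $2\theta_1$ and the bad state by $2\theta_0$. Consequently no single iterate count, and more generally no single uniform rotation of the plane, can simultaneously send $\theta_1$ to one pole and $\theta_0$ to the other—that is two constraints against one free parameter. This is precisely why Theorem~\ref{theorem:bhmt}, in which the bad amplitude is already pinned at $0$, does not apply verbatim, and it is the difficulty the construction must defeat.

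To get around this I would reduce the two-sided situation to a one-sided one. In the first stage, using the known value of $\theta_0$, I would apply an exact-amplification sequence—integer iterates of $Q$ followed by a single phase-tuned iterate, with the roles of $E$ and $F$ interchanged via the complementary projector—chosen so that within the bad sector the state is rotated \emph{exactly} onto $\ket{F}$, making the probability of outcome $E$ for bad inputs precisely $0$. The tuning uses only $\theta_0$, $P_E$, and $S_\psi$, so it is legitimate for our non-standard measurement and input. The same circuit applied to a good input rotates $\theta_1$ to some angle $\phi_1$, and I must argue $\phi_1 \not\equiv 0 \pmod{\pi}$ so that good inputs keep a strictly positive $E$-probability. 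Since $\theta_1 \neq \theta_0$, a good input can land on the $F$-pole only for isolated parameter choices, which the freedom in the iterate count lets me avoid; I would record this as a short genericity lemma. After this stage the promise is one-sided: bad inputs yield $E$ with probability $0$, good inputs with some $a_1 > 0$.

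In the second stage I would apply the generalized exact one-sided amplification of Theorem~\ref{theorem:bhmt} to the good sector, boosting $a_1$ to exactly $1$. The point that makes the two stages compatible is that a bad input now sits at the $F$-pole, and $\ket{F}$ is a fixed point of every stage-two iterate: $S_E$ fixes it, and the reflection about the (already produced) input fixes it, so $Q$ acts as $-I$ there and the $E$-probability stays $0$ throughout. Composing the two stages yields the circuit $C'$, which calls $C$ and $C^\dagger$, uses $\ket{\psi}$ together with at most a few ancill\ae\ (for the tuned phases and for reflecting about the input), is read out by the same operator $P_E$, and decides $\Phi$ with certainty: good and bad inputs produce the two outcomes deterministically. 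The steps I expect to be most delicate are the genericity argument in stage one—guaranteeing that zeroing the bad amplitude does not accidentally zero the good one—and checking that a single uniform circuit can reflect about whichever basis-state input it is handed, so that the per-sector analysis assembles into one algorithm.
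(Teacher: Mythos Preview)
Your two-stage plan is correct and is structurally the same strategy the paper uses: reduce the two-sided promise to a one-sided one by a first round of exact amplitude amplification, then finish with a second one-sided round, exploiting throughout that the Grover iterate fixes probability~$0$ (equivalently, the $F$-pole is a fixed point). The paper simply orders the two rounds the other way around---it first amplifies the \emph{good} probability $\epsilon$ exactly to~$1$ via its Separability Lemma (Lemma~\ref{lemma:fully-separable}), then swaps the projector outcomes to obtain a $(0,1-\delta')$-separable system, then applies the lemma again---so your ``kill the bad side first'' is the mirror image of the paper's ``max the good side first.'' One place where the paper's packaging buys something: the Separability Lemma proves by direct calculation that after amplifying $\epsilon$ exactly to $1$ the companion probability $\delta$ lands at some $\delta'$ with $\delta\le\delta'<1$ (and $\delta'=0$ iff $\delta=0$); dualizing this gives your needed genericity statement outright, so you do not have to rely on ``freedom in the iterate count'' to avoid accidentally zeroing the good amplitude. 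Your last worry---building a single circuit that can reflect about whichever basis-state input it is handed---is precisely what the paper isolates as Lemma~\ref{lemma:uniform-bb-transform}, resolved there by the fanout-gate trick of copying the basis input into an ancilla register and performing the conditional phase there.
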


The primary contribution of this paper are a few interesting applications of
amplitude amplification. If we have two quantum
systems which differ only in their circuit, then we can essentially use their
output distribution, after suitably amplifying the systems, to distinguish
between those circuits. We show how this can be used to detect faults in
quantum circuits.

On the other hand, if we have two systems that differ only in
their input states, then we get a way to amplify their probability of
acceptance. This is exactly at the core of our proof that quantum classes
equivalent to exact two-sided and exact one-sided error classes can be
``derandomized'', in the sense that their errors can be completely eliminated.

One of the major, and still open, questions of {\em Complexity
Theory} is how \P compares to \RP and
\BPP, one-sided and two-sided bounded error polynomial-time complexity classes. The current best results
are the obvious inclusions $\P \subseteq \RP \subseteq \BPP$, though there are
some evidences of their equivalence.
Same question for their quantum analogs is in an equally indeterminate state,
i.e., $\EQP \subseteq \RQP \subseteq \BQP$; these are quantum analogs of \P,
$\RP$ and $\BPP$, respectively. There is not even much evidence
that $\EQP = \BQP$. One
approach towards settling this question is studying restricted versions of these
classes. Our results show that their
exact error versions, $\ERQP$ and $\EBQP$, are identical to $\EQP$ as long as
the two(one)-sided errors are fixed
for all instances~\footnote{The same question for classical classes was asked here:
\url{http://cstheory.stackexchange.com/questions/20027/in-what-class-are-randomized-algorithms-that-err-with-exactly-25-chance}.}.

\vspace*{1em}\noindent{\bf Organization:} The rest of the paper is organized as
follows.
We discuss quantum distinguishability of quantum systems in
Section~\ref{section:distinguish-qs}. The proof of our main theorem on
distinguishability is given in Section~\ref{section:proof}. This theorem, even
though quite general, is not suitable enough to amplify a collection of
quantum systems in a uniform manner; in Section~\ref{section:uniform} we
discuss a uniform version of our main theorem.
Section~\ref{section:distinguish-circuits} contains one of the
applications about detection of faults in quantum circuits and
in Section~\ref{section:exact-error-classes} we show that $\EBQP=\ERQP=\EQP$ and
prove Theorem~\ref{theorem:aa-two-sided} for regular circuits and those with
oracle gates.

\section{Distinguishing quantum systems}\label{section:distinguish-qs}
We will use $\mu_p$ to denote a distribution $\pair{p}{1-p}$ over outcomes
$\pair{E}{F}$ and $\mu(\Q)$ to denote output distribution of a quantum system $\Q$.

As explained earlier, the main problem we are interested in involves a given instance of a quantum system $\Q$ which
can be either $\Q_\delta$ with output
distribution $\mu_\delta = \pair{\delta}{1-\delta}$ or $\Q_\epsilon$ with output
distribution $\mu_\epsilon = \pair{\epsilon}{1-\epsilon}$. We want to construct
a quantum algorithm, rather a circuit, that can ``call $\Q$ as a subroutine''
and determine if $\Q=\Q_\delta$ or $\Q=\Q_\epsilon$.

We can even extend this
to multiple quantum systems $\S = \{\Q_1, \Q_2, \ldots \}$ where output
distribution of any $\Q_i$ is either $\mu_\delta$ or $\mu_\epsilon$. We use the
notation $QD(\Q_1, \Q_2, \ldots)$ or even shorter $QD(\S)$ to refer to the {\em
quantum distinguishability} problem among quantum systems of $\S$.

Our goal is to design a quantum circuit in which we can ``embed any given $\Q$'' as a
black-box.
This motivated us to define a notion of black-box extension for quantum
systems, similar to quantum algorithms with subroutines or quantum circuits with
black-box operators, allowing only trivial extensions to inputs states and
projection
operators. We refer to these as $\BB$-transforms ($\BB$
standing for ``black-box''). A general illustration is given in Figure~\ref{fig:b_transform}.

\begin{figure*}[!h]
\centering \resizebox{0.8\linewidth}{!}{\input 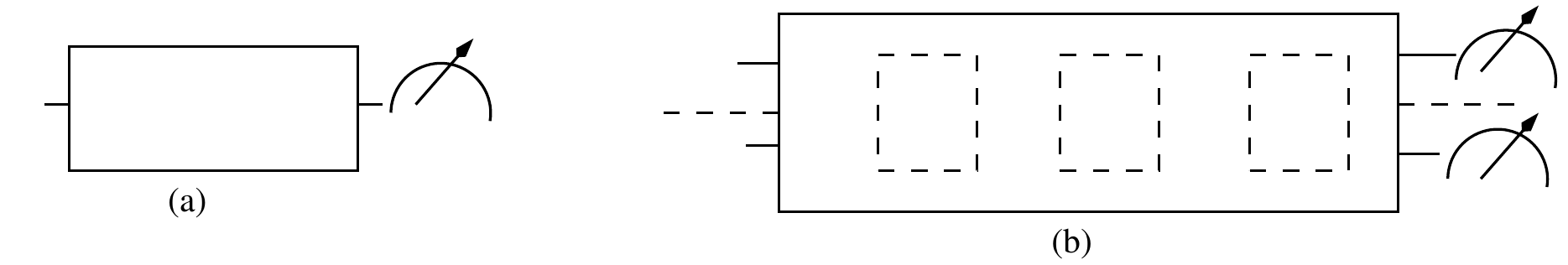_t} 
\caption{Schematic for $\BB$-transform}
\label{fig:b_transform}
\end{figure*}

\begin{definition}[$\BB$-transform]\label{defn:bb-transform}
    A (non-uniform) $\BB^n_{\delta,\epsilon}$-transform for $n$-qubit systems is
    a (non-uniform) procedure for extending an $n$-qubit QS $\Q_1
    = \llangle \ket{\psi}, C, \Proj \rrangle$ to a (possibly larger) QS $\Q_2 =  \llangle \ket{\psi'}, C', \Proj' \rrangle$ 
    whose components are black-box extensions of
    the components of $\Q_1$.
    \begin{itemize}
	\item The input in $\Q_2$ is an extension of the input in $\Q_1$
	    supplemented by ancill\ae\ qubits initialized to a fixed
	    state (wlog. in state $\ket{\mathbf{0}}$), i.e., 
	    $\ket{\psi'}=\ket{\psi} \otimes \ket{00\cdots 00}$.
	\item The projection operator in $\Q_2$ is an extension of the projection
	    operator in $\Q_1$ to include measurement of the ancill\ae\ in a basis
	    independent of $\Q_1$, i.e., $\Proj' = \Proj \otimes \Proj_a$.
	\item The number of ancill\ae\ and the operator $\Proj_a$ are independent of $\Q_1$ and depend upon $\delta, \epsilon$.
	\item The circuit in $\Q_2$ calls $C$ and $C^\dagger$ and uses
	    additional gates that depend upon $\delta$ and $\epsilon$.
	\item $C'$ may also use gates that depend upon $\Proj_E$ and
	    $\ket{\psi}$.
    \end{itemize}
\end{definition}

We call the transformations that satisfy the final condition as ``non-uniform'' since the transformed circuit
could be using gates that depend upon the input states and measurement operators
of the respective quantum system.
Note that the non-uniformity is not with respect to $n$, the number of qubits of
the quantum system, but with respect to the gates of the transformed
circuit. It will be clear from the proof of
Theorem~\ref{theorem:main-separable} that the
transformations that will be used in this paper are anyway uniform in $n$. In any
case, we will always drop $n$ from the superscript of $\BB^n_{\delta,\epsilon}$.
We will revisit the notion of non-uniformity in Section~\ref{section:uniform}. 

We want transformed quantum circuits that solve the quantum distinguishing problem without any error which
motives the next definition.

\begin{definition}\label{defn:solution}
    For a set of quantum systems $\S = \{\Q_1, \Q_2, \ldots \}$ with output distributions
    either $\mu_p$ or $\mu_q$ (for $p < q$), a
    $\BB$-transform $\BB$ is said to solve $QD(\S)$ with error
    $(\delta,\epsilon)$, in other words $\BB$ is a
    $(\delta,\epsilon)$-solution of $QD(\S)$, if the following holds for some
    $\delta < \epsilon$ and all $\Q
    \in \S$.
    \begin{itemize}
	\item If $\mu(\Q) = \mu_p$, then outcome of $\BB(\Q)$ is $E$ with
	    probability $\delta$.
	\item If $\mu(\Q) = \mu_q$, then outcome of $\BB(\Q)$ is $E$ with
	    probability $\epsilon$.
    \end{itemize}
    $QD(\S)$ is said to have a {\em perfect solution} if $\BB$ is a
    $(0,1)$-solution of $QD(\S)$.
\end{definition}

It can be seen that the identity $\BB$-transform is a trivial solution of the
above $QD(\S)$ with error $(p,q)$.
The last part of the above definition is based on the fact that if $\BB$ is a
$(0,1)$-solution of $QD(\S)$, then the outcome of $\Q' = \BB(\Q)$ can be used to
correctly infer the output distribution of any given instance $\Q \in \S$.
Let $\Q' = \BB(\Q)$ -- which is essentially an extension of the input of $\Q$
with some ancill\ae, an extension of its measurement operator and a circuit that
can call the circuits of $\Q$ (and its inverse) in a black-box manner. 
If the output distribution of $\Q$ is $\mu_p$, then the outcome of
$\Q'$ is never $E$ and otherwise (i.e., if the output
distribution of $\Q$ is $\mu_q$) the outcome of $\Q'$ is always $E$ without any error.

The main theorem of our work is stated next.

\begin{theorem}\label{theorem:main-separable}
    Let $\S = \{\Q_1, \Q_2, \ldots \}$ be a collection of quantum systems such
    that output distribution of any $\Q_i \in \S$ is either $\mu_\delta$
    or $\mu_\epsilon$ for some $\delta < \epsilon$.
    Then $\S$ is perfectly-solvable via some $\BB$-transition
    $\BB_{\delta,\epsilon}$, i.e., any $\Q_i \in \S$ can be transformed by $\BB_{\delta,\epsilon}$ to some
    $\Q'_i$ such that:
    \begin{itemize}
	\item if output distribution of $\Q_i$ is $\mu_\delta$, then outcome of
	    $\Q'_i$ is never $E$ and
	\item if output distribution of $\Q_i$ is $\mu_\epsilon$, then outcome of
	    $\Q'_i$ is always $E$.
    \end{itemize}
\end{theorem}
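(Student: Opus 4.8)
The plan is to reduce the whole problem to a two-dimensional rotation and then cast the desired behaviour as a boundary-value problem for the rotation angle. Fix an arbitrary $\Q_i = \llangle \ket{\psi}, C, \Proj \rrangle \in \S$ and let $\ket{\phi} = C\ket{\psi}$ be its output state. Write $\ket{\phi} = \sin\theta\,\ket{\mu} + \cos\theta\,\ket{\nu}$, where $\ket{\mu}$ is the normalized projection of $\ket{\phi}$ onto the range of $P_E$ and $\ket{\nu}$ the normalized projection onto the orthogonal complement, so that $\sin^2\theta = p_E \in \{\delta,\epsilon\}$ and hence $\theta \in \{\theta_\delta, \theta_\epsilon\}$ with $\theta_\delta = \arcsin\sqrt{\delta} < \theta_\epsilon = \arcsin\sqrt{\epsilon}$, both determined by the parameters alone. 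The first step is to observe that every operation available to a $\BB$-transform preserves the plane $\mathrm{span}\{\ket{\mu},\ket{\nu}\}$: the reflection $S_E = I - 2P_E$ about the bad subspace is a gate depending only on $P_E$, and the reflection about $\ket{\phi}$ can be built as $C S_\psi C^\dagger$ from $C$, $C^\dagger$ and a reflection $S_\psi$ about $\ket{\psi}$ (a gate depending on $\ket{\psi}$). Thus the entire evolution is an $SU(2)$ action on this plane whose algebraic form is \emph{identical for every $\Q_i$} and depends on the input only through the scalar $\theta$.

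Next I would recall that the Grover iterate $Q = -\,(C S_\psi C^\dagger)\,S_E$ acts on the plane as a rotation by $2\theta$, so after $k$ iterations the state lies at angle $(2k+1)\theta$ from $\ket{\nu}$ and the probability of outcome $E$ is $\sin^2\!\big((2k+1)\theta\big)$. Promoting the two reflections to phase gates $S_E(\varphi)$ and $C S_\psi(\phi) C^\dagger$ (still assembled only from $P_E$, $\ket{\psi}$, $C$, $C^\dagger$) turns each iterate into a tunable $SU(2)$ element, and, following the exact-amplification analysis behind Theorem~\ref{theorem:bhmt}, a block of full iterates followed by a single phase-adjusted iterate realizes on the plane a net rotation sending the input angle $\theta$ to some controllable image $h(\theta)$. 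The design goal is then to choose the iteration count and the phases \emph{as functions of $\delta,\epsilon$ only} so that the realized $E$-probability, viewed as a function of the input angle, equals $0$ at $\theta_\delta$ and $1$ at $\theta_\epsilon$; setting $\Proj_a$ to a fixed basis measurement on the ancill\ae\ and packaging the result would then yield the claimed $\BB_{\delta,\epsilon}$-transform, uniform over all of $\S$.

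I expect the main obstacle to be solving these two conditions \emph{simultaneously} with one fixed circuit, since the circuit cannot know whether its input carries $\theta_\delta$ or $\theta_\epsilon$. Concretely, one wants the output statistic to have the form $\sin^2\!\big(g(\theta)\big)$ for an affine $g(\theta) = A\theta + B$ satisfying $g(\theta_\delta) \equiv 0$ and $g(\theta_\epsilon) \equiv \pi/2 \pmod{\pi}$; subtracting the two requirements pins the slope to $A = \frac{\pi/2}{\theta_\epsilon - \theta_\delta}$ and forces the offset $B = -A\,\theta_\delta$, both functions of $\delta,\epsilon$ alone. The difficulty is that the bare Grover rotation only supplies \emph{odd-integer} multiples of $\theta$ and \emph{no} constant offset, so the non-integer slope $A$ and the offset $B$ must be synthesized from the phase degrees of freedom of the generalized iterate (and, if a single real parameter does not suffice, from one ancilla qubit providing the extra rotational freedom). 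The delicate points will be exhibiting explicit real, physically realizable phases that hit both targets \emph{exactly}—exactness is essential, as the null at $\theta_\delta$ must be an exact multiple of $\pi$ to force probability precisely $0$—and verifying that no amplitude leaks outside the invariant plane. Once the phases are shown to exist and to depend only on $\delta,\epsilon$, the $\delta$-case collapses to outcome $E$ with probability $0$ and the $\epsilon$-case to probability $1$, giving the perfect $(0,1)$-solution.
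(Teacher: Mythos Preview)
Your two-dimensional reduction is correct, but the affine ansatz $g(\theta)=A\theta+B$ is the wrong target and cannot be reached by the operations a $\BB$-transform supplies. The only axes in the plane you can phase about are the $\ket{\nu}$-axis (via $S_E(\alpha)$, built from $P_E$) and the $\ket{\phi}$-axis (via $CS_\psi(\beta)C^\dagger$), and the latter sits at the \emph{unknown} angle $\theta$ from the former. Any finite word in such phase gates, applied to $\ket{\phi}$ and projected onto $\ket{\mu}$, gives an amplitude that is a trigonometric polynomial in $\theta$ of degree bounded by the number of calls to $C,C^\dagger$; a function $\sin(A\theta+B)$ with $A\notin\mathbb{Z}$ is not a finite trigonometric polynomial, so no choice of phases realises it exactly. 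An ancilla lets you rescale probabilities multiplicatively (this is precisely the paper's $\BB^+_\epsilon$ move), but that is not an additive offset in the angle variable either. Thus the ``delicate point'' you flag is a genuine gap: the object you propose to construct does not exist in the form stated, and your outline stops exactly where a new idea is needed.

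The paper avoids the simultaneous two-constraint problem by decomposing it into two \emph{one}-sided exact amplifications linked by a projector swap. Its Separability Lemma says: from a $(\delta,\epsilon)$-separable $\S$ one can pass to a $(\delta',1)$-separable collection by a $\BB$-transform $\BB_\epsilon$ tuned to $\epsilon$ alone (standard iterates plus one phase-adjusted iterate, i.e.\ exactly the BHMT/H{\o}yer mechanism you invoke), and crucially $\delta'=0$ iff $\delta=0$, because every Grover iterate fixes $p=0$. The theorem then follows by composing three transforms: $\BB_\epsilon$ sends $(\delta,\epsilon)\mapsto(\delta',1)$; the trivial $\BB_2$ that swaps the two projectors of $\Proj$ turns $(\delta',1)$ into $(0,1-\delta')$; and a second invocation of the Separability Lemma sends $(0,1-\delta')\mapsto(0,1)$, the zero surviving by the preservation property. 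At no stage does any single amplification have to hit two prescribed values at once, which is why the construction goes through without the phase-existence problem your plan runs into.
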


The proof of this theorem is presented in the next section.
Note that, unlike Theorem~\ref{theorem:bhmt} which only applies to one-sided error
algorithms, we prove that two-sided error algorithms can also be ``amplified to
certainty''.
A straight-forward application of
this is to exactly distinguish between two QS with known output
distributions, such as Theorem~\ref{theorem:application}
(Section~\ref{section:intro}).

\begin{proof}[Proof of Theorem~\ref{theorem:application}]
Consider the transformation $\BB_{\delta,\epsilon}^n$ from
Theorem~\ref{theorem:main-separable}. Given an $n$-qubit $\Q = \langle \ket{\psi}, C, \Proj
\rangle$, construct the transformed QS $\BB^n_{\delta,\epsilon}(\Q)=\big\langle \ket{\psi}\otimes \ket{\mathbf{00\ldots
0}}, C', \Proj \otimes \Proj_a \big\rangle$.
By Theorem~\ref{theorem:main-separable}, the output state of the transformed
circuit $C'$, when given
$\ket{\psi}$ (along with a few ancill\ae\ in a fixed state), upon measurement by
a simple extension of $\Proj$, has outcome either $E$ or $F$, depending upon
whether $\mu(\Q)=\mu_\delta$ or $\mu(\Q)=\mu_\epsilon$.
\end{proof}

\section{Proof of Theorem~\ref{theorem:main-separable}}
\label{section:proof}
We first state and prove our main technical tool -- the {\em Separability
Lemma} which essentially amplifies amplitudes of one-sided error algorithms. The
Lemma can be proven using already known techniques of amplitude amplifications
(e.g., see \cite[Sec 2.1]{BHMT}).
We give an alternative recursive construction that is optimized
towards amplifying fixed probabilities.

We use the following notation for the sake of brevity. Given a collection of
quantum systems $\{\Q_1, \Q_2, \ldots \}$ (such collections will be always
denoted by $\S$), we say that $\S$ is
$(\delta,\epsilon)$-separable (for some $\delta <\epsilon$) if output
distribution of any $\Q_i$ in $\S$ is either $\mu_\delta$ or $\mu_\epsilon$.

\begin{lemma}\label{lemma:fully-separable}[Separability]
    For $\delta < \epsilon < 1$ and a collection of quantum systems $\S_1$ which
    is $(\delta,\epsilon)$-separable, there is a $\BB$-transform $\BB_\epsilon$ which converts
    $\S_1$ to a $(\delta',1)$-separable collection of quantum systems (for some $\delta \le \delta' < 1$). Additionally, $\delta=\delta'=0$ if and only if $\delta=0$.
\end{lemma}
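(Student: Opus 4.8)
The plan is to realize $\BB_\epsilon$ as \emph{exact amplitude amplification tuned to the larger probability} $\epsilon$, and then to read off what this fixed transform does to the systems whose probability is $\delta$. Fix any $\Q = \llangle \ket{\psi}, C, \Proj \rrangle \in \S_1$ and let $a \in \{\delta,\epsilon\}$ be its probability of outcome $E$, so $\|P_E C\ket{\psi}\|^2 = a$. I would first record the standard two-dimensional picture: writing $\ket{g}$ and $\ket{b}$ for the normalized projections of $C\ket{\psi}$ onto the image of $P_E$ and its complement, the state decomposes as $C\ket{\psi} = \sin\theta_a\,\ket{g} + \cos\theta_a\,\ket{b}$ with $\theta_a = \arcsin\sqrt{a} \in [0,\tfrac{\pi}{2})$, where $\epsilon<1$ is used to guarantee $\theta_\epsilon<\pi/2$. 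The point that drives everything is that $\theta_\delta < \theta_\epsilon$ whenever $\delta<\epsilon$.

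Next I would assemble the Grover iterate from black-box ingredients only. The reflection $S_E = I - 2P_E$ is a gate depending on $P_E$, and $C S_\psi C^\dagger$ with $S_\psi = I - 2\ket{\psi}\bra{\psi}$ is the reflection about $C\ket{\psi}$, implementable by a call to $C^\dagger$, a gate depending on $\ket{\psi}$, and a call to $C$; Definition~\ref{defn:bb-transform} permits exactly this dependence on $P_E$ and $\ket{\psi}$. Hence $Q = -\,C S_\psi C^\dagger\, S_E$ is realizable inside a $\BB$-transform, and on the plane $\mathrm{span}\{\ket{g},\ket{b}\}$ it acts as rotation by $2\theta_a$, so $Q^m C\ket{\psi}$ has outcome-$E$ probability $\sin^2\big((2m+1)\theta_a\big)$. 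Crucially the gates of $Q$ and the iteration count $m$ are selected using $\epsilon$ alone, so the \emph{same} transform is applied uniformly to every $\Q \in \S_1$, the output depending only on whether $a=\delta$ or $a=\epsilon$.

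When $\pi/2$ is an odd multiple of $\theta_\epsilon$, i.e.\ $(2m+1)\theta_\epsilon = \pi/2$ for an integer $m$, the lemma is immediate: the $\epsilon$-systems rotate to probability $\sin^2(\pi/2)=1$, while a $\delta$-system lands at $\delta' = \sin^2\big((2m+1)\theta_\delta\big)$. Since $0 \le \theta_\delta < \theta_\epsilon$ we get $0 \le (2m+1)\theta_\delta < \pi/2$, an interval on which $\sin$ is strictly increasing with values in $[0,1)$; therefore $\delta' < 1$, and comparing arguments $\theta_\delta \le (2m+1)\theta_\delta$ gives $\delta = \sin^2\theta_\delta \le \delta'$. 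The same monotonicity yields $\delta'=0$ exactly when $\theta_\delta=0$, that is, exactly when $\delta=0$, which is the required biconditional. This settles all three claimed properties in the aligned case, and it is precisely here that the hypothesis $\delta<\epsilon$ does the real work, keeping the $\delta$-systems from over-rotating past $\pi/2$.

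The main obstacle is the generic case, where no integer $m$ makes $(2m+1)\theta_\epsilon$ land on $\pi/2$; plain iteration can only bracket $\pi/2$ and never hit it. Here I would invoke exact amplitude amplification in the style of \cite[Sec 2.1]{BHMT}: take $m = \lceil \pi/(4\theta_\epsilon) - \tfrac12\rceil$, the least count that reaches or overshoots $\pi/2$, and correct the final fractional rotation either by replacing the $\pi$-phases in the two reflections of $Q$ with tuned phases, or by first damping every amplitude through one ancilla rotation so that the effective $\epsilon$-angle becomes exactly $\pi/(2(2m+1))$. Either correction remains a legal $\BB$-transform (it adds only ancill\ae\ measured by some $\Proj_a$ and gates depending on $\epsilon$) and sends the $\epsilon$-systems to probability exactly $1$. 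The delicate point is re-verifying the three properties for $\delta$: the correction no longer acts as the clean map $\theta \mapsto (2m+1)\theta$, so I must argue that the net action on the plane is a fixed, strictly increasing function $g$ of the input angle with $g(\theta_\epsilon)=\pi/2$, $g(0)=0$, $g(\theta)\in(0,\pi/2)$, and $g(\theta)\ge\theta$ for all $0<\theta<\theta_\epsilon$. The first three are structural; the inequality $g(\theta)\ge\theta$, which yields $\delta'\ge\delta$, is the one genuinely requiring care, since damping can temporarily shrink the angle (convexity of $\arcsin$ makes the bookkeeping nontrivial). I expect the author's recursive construction is phrased to keep each stage a monotone amplification toward $\pi/2$ precisely so that this inequality stays transparent while exactness is retained; once $g$ has these properties, the verification of $\delta \le \delta' < 1$ and $\delta'=0\iff\delta=0$ is identical to the aligned case above.
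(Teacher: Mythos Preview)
Your approach is correct and is precisely the ``already known'' route the paper itself cites (\cite[Sec.~2.1]{BHMT}) before giving its own alternative. The paper instead case-splits on $\epsilon$: for $\epsilon\in[1/4,1/2]$ a \emph{single} Grover iterate with tuned phases $\theta=\alpha=\arccos(1-\tfrac{1}{2\epsilon})$ sends $\epsilon\mapsto 1$ directly; for $\epsilon>1/2$ it first adds one ancilla measured by an operator $\Proj_\epsilon$ so that the success probability is scaled by $1/(2\epsilon)$ (i.e.\ damped to $1/2$), then applies the $\epsilon=1/2$ case; for $\epsilon<1/4$ it iterates the standard $\theta=\alpha=\pi$ step recursively until the probability enters $[1/4,1/2]$, then applies the first case. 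Your iterate-then-correct scheme and the paper's split-then-iterate scheme are reorganizations of the same ingredients; the paper's Lemma~\ref{lemma:c_k_induction} in fact shows its recursive branch unrolls to the same iterated $\Q$ you use.

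One expectation you voiced is off: you guessed the paper keeps every stage a monotone amplification so that $\delta'\ge\delta$ is transparent, but the $\epsilon>1/2$ branch \emph{does} damp first, exactly the step you worried about. The paper's Corollary~\ref{cor:opt-bb-transform} asserts $\delta'\ge\delta$ without a detailed proof, and the per-case proofs of the Separability Lemma only check the clause $\delta=0\iff\delta'=0$ (which is all Theorem~\ref{theorem:main-separable} actually consumes: after the swap one applies the lemma again with the smaller probability equal to $0$, and for that only $\delta'<1$ and the biconditional matter). So the inequality you flagged as ``genuinely requiring care'' is handled at the same level of detail in the paper; the properties $g(0)=0$ and $g(\theta_\delta)<\pi/2$ that you called structural are the ones doing the real work, and your argument for those is complete.
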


Given an instance $\Q = \langle \ket{\psi}, C, \Proj \rangle$ of some $\Q_i \in \S_1$, Lemma~\ref{lemma:fully-separable} gives us a way to
determine whether the distribution of $\Q$ is $\pair{0}{1}$ or
$\pair{\epsilon}{1-\epsilon}$ by first transforming $\Q$ to $\BB(\Q) = \Q' =
\langle \ket{\psi'}, C', \Proj' \rangle$ and then measuring the output of $C'$
on $\ket{\psi'}$ (which is a simple extension of the original input state)
using measurement operator $\Proj'$ (which is also a simple extension of the
original measurement operator).

\subsection{Grover iterator}
\label{subsection:grover}
As is usual in all analysis of amplitude amplification, the main operator to
study is the Grover iterator~\cite{Grover1996,BHMT}. Suppose we have a circuit $C$ acting
on an input state $\ket{\psi}$ and supposed to be measured using a two-output
projective measurement operator $\Proj=\langle P_E, I-P_E \rangle$. We consider a generalized version,
similar to the one studied by H{\o}yer~\cite{Hoyer2000}: $G(C,\ket{\psi}, \Proj, \theta,\alpha)
= C S_{\ket{\psi}} C^\dagger S_\Proj C$ using these additional gates: 
$S_{\ket{\psi}} = I - (1-e^{\I\theta})\ketbra{\psi}{\psi}$ and $S_\Proj = I -
(1-e^{\I\alpha})P_E$.

Let $\ket{\psi'}=C\ket{\psi}$ denote the output state, $\ket{\psi_E} = P_E
\ket{\psi'}$
and $p$ denote $\braket{\psi_E}{\psi_E}$ -- the probability of measuring outcome
$E$ for this output state.

It is easy to see that $C S_{\ket{\psi}} C^\dagger  
= I - (1-e^{\I\theta})\ketbra{\psi'}{\psi'}$ and 
$S_\Proj C \ket{\psi} = \big(I - (1-e^{\I\alpha})P_E\big)\ket{\psi'}$.
One can then compute $\ket{\psi''}=G \ket{\psi}$ as 
$\big( e^{\I\theta} +
(1-e^{\I\alpha})(1-e^{\I\theta})p \big) \ket{\psi'} -
(1-e^{\I\alpha})\ket{\psi_E}$
and $P_E \ket{\psi''} = \big( e^{\I\theta} + e^{\I\alpha} - 1 +
(1-e^{\I\alpha})(1-e^{\I\theta})p \big) \ket{\psi_E}$.

We get the following lemma summarizing the relative increase in probability after
one application of our Grover iterator.
We will use $p'(\theta,\alpha,p)$ to denote the new probability of
measuring outcome $E$ on the output state
after applying $G$ on input $\ket{\psi}$.

\begin{lemma}\label{lemma:grover-iterator}
    Given a quantum system $\Q_1 = \langle \ket{\psi}, C, \Proj \rangle$ and
    $\alpha,\theta \in [0,\pi]$, let $G$ be
    the circuit for the Grover iterator $G(C,\ket{\psi}, \Proj, \theta,\alpha) =
    C S_{\ket{\psi}} C^\dagger S_\Proj C$. If $p$ denotes the probability of observing
outcome $E$ for $\Q_1$ and $p'$ denotes the same probability for the QS $\langle
\ket{\psi}, G, \Proj \rangle$, then $p' = p\Delta$ where $\Delta = \left|\big( e^{\I\theta} + e^{\I\alpha} - 1 +
(1-e^{\I\alpha})(1-e^{\I\theta})p \big)\right|^2$.
\end{lemma}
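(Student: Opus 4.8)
The plan is to compute the action of $G$ on $\ket{\psi}$ directly by linear algebra, exploiting the fact that the whole calculation stays inside the (at most) two-dimensional subspace spanned by $\ket{\psi'}=C\ket{\psi}$ and $\ket{\psi_E}=P_E\ket{\psi'}$. The only structural facts I need are that $C$ is unitary (so $\ket{\psi'}$ is normalized and $C\ketbra{\psi}{\psi}C^\dagger = \ketbra{\psi'}{\psi'}$), and that $P_E$ is an orthogonal projector, i.e. $P_E = P_E^\dagger$ and $P_E^2 = P_E$. The latter gives $P_E\ket{\psi'}=\ket{\psi_E}$, $P_E\ket{\psi_E}=\ket{\psi_E}$, and, crucially, that the inner product $\braket{\psi'}{\psi_E} = \bra{\psi'}P_E\ket{\psi'}$ is the \emph{real} number $p$. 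Since $p'$ is by definition $\braket{P_E\psi''}{P_E\psi''}$ where $\ket{\psi''}=G\ket{\psi}$, it suffices to track the $\ket{\psi_E}$-component of $\ket{\psi''}$ and take its squared norm.

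First I would establish the two intermediate identities quoted in the text. Using unitarity, $C S_{\ket{\psi}} C^\dagger = C\big(I - (1-e^{\I\theta})\ketbra{\psi}{\psi}\big)C^\dagger = I - (1-e^{\I\theta})\ketbra{\psi'}{\psi'}$, and applying $S_\Proj$ to $\ket{\psi'}$ gives $S_\Proj C\ket{\psi} = \ket{\psi'} - (1-e^{\I\alpha})\ket{\psi_E}$. Then I would apply the first operator to the second vector. Because $\big(I-(1-e^{\I\theta})\ketbra{\psi'}{\psi'}\big)\ket{\psi'} = e^{\I\theta}\ket{\psi'}$ (using $\braket{\psi'}{\psi'}=1$) and $\big(I-(1-e^{\I\theta})\ketbra{\psi'}{\psi'}\big)\ket{\psi_E} = \ket{\psi_E} - (1-e^{\I\theta})p\,\ket{\psi'}$ (using $\braket{\psi'}{\psi_E}=p$), collecting terms yields exactly
\[
\ket{\psi''} = \big(e^{\I\theta} + (1-e^{\I\alpha})(1-e^{\I\theta})p\big)\ket{\psi'} - (1-e^{\I\alpha})\ket{\psi_E}.
\]

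Finally I would apply $P_E$ to $\ket{\psi''}$. Since $P_E\ket{\psi'}=P_E\ket{\psi_E}=\ket{\psi_E}$, every term collapses onto $\ket{\psi_E}$, giving $P_E\ket{\psi''} = \big(e^{\I\theta} + e^{\I\alpha} - 1 + (1-e^{\I\alpha})(1-e^{\I\theta})p\big)\ket{\psi_E}$ after combining the scalar $-(1-e^{\I\alpha}) = e^{\I\alpha}-1$ with the coefficient of $\ket{\psi'}$. Taking the squared norm and pulling out $\braket{\psi_E}{\psi_E}=p$ then gives $p' = \Delta\, p$ with $\Delta$ as stated. There is no real obstacle here beyond bookkeeping of the complex phases; the one point that must be handled with care is the reality of $p$, which comes from the Hermiticity and idempotency of $P_E$ and is exactly what lets the coefficient factor out cleanly as a single modulus-squared rather than leaving cross terms between $\ket{\psi'}$ and $\ket{\psi_E}$.
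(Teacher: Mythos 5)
Your proof is correct and follows essentially the same route as the paper: you derive the same intermediate identities $C S_{\ket{\psi}} C^\dagger = I - (1-e^{\I\theta})\ketbra{\psi'}{\psi'}$ and $S_\Proj C\ket{\psi} = \ket{\psi'} - (1-e^{\I\alpha})\ket{\psi_E}$, obtain the same expressions for $\ket{\psi''}$ and $P_E\ket{\psi''}$, and read off $p' = p\Delta$. Your explicit justification that $\braket{\psi'}{\psi_E} = \bra{\psi'}P_E\ket{\psi'} = p$ is real (via Hermiticity and idempotency of $P_E$) is a detail the paper leaves implicit, and it is exactly the right point to make careful.
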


First, $p=0$ if and only if $p'=0$ which means amplification has no effect on
impossible outcomes.
On the other hand, if $p > 0$, $p'$ is maximized when $\theta=\alpha$;
it can be shown that
$\Delta = \big( (1-2p)\cos\theta - 2(1-p) \big)^2 +
\sin^2\theta$ in that case.
We will use $\Delta^*_p$ to denote the maximum value of $\Delta$ for any $p$ and
using optimal $\theta$ and $\alpha$. The corresponding {\em optimal Grover iterator} will be
denoted as $G^*_p(C,\ket{\psi}, \Proj)$; note that $G^*$ increases the probability from
$p$ to $p' = p \Delta^*_p$. Table~\ref{table:delta} summarizes the optimum value
of $p'$ and the relative increase for different possible values of initial
probability $p$.
Details of the relevant calculations are given in Appendix.

The following definition and corollary essentially describes the optimum
$\BB$-transform.

\begin{definition}[Optimal $\BB$-transform]\label{defn:opt-bb-transform}
$\BB_{p}^* : \big\langle \ket{\psi}, C, \Proj \big\rangle \longrightarrow
    \big\langle \ket{\psi}, G^*_p\big(C, \ket{\psi}, \Proj \big), \Proj
    \big\rangle$
\end{definition}

\begin{corollary}\label{cor:opt-bb-transform}
    If the output distribution of a QS $\Q$ is
    $\mu_\epsilon$, then the output distribution of $\BB^*_\epsilon(\Q)$ is $\langle \epsilon \Delta^*_\epsilon,
    1-\epsilon\Delta^*_\epsilon \rangle$. On the other hand, if the output distribution is
    $\mu_\delta$ (for some $\delta < \epsilon$), then the output distribution of $\BB^*_\epsilon(\Q)$
    is $\langle \delta',    1-\delta' \rangle$ for some $\delta' \ge \delta$ which
    can be computed using $\delta$ and $\epsilon$. Furthermore, $\delta =
    \delta'$ if and only if $\delta = 0$ (in which case, $\delta' = 0$).
\end{corollary}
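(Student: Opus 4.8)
The plan is to read both halves of Corollary~\ref{cor:opt-bb-transform} directly off Lemma~\ref{lemma:grover-iterator}, the single crucial observation being that the angles $\theta,\alpha$ defining the optimal iterator $G^*_\epsilon$ inside $\BB^*_\epsilon$ are fixed once and for all to maximise $\Delta$ at probability $\epsilon$; they therefore do \emph{not} change when $\BB^*_\epsilon$ is applied to a system whose success probability is some $\delta<\epsilon$. Since the maximiser has $\theta=\alpha$, throughout I may use the closed form $\Delta=\bigl((1-2p)\cos\theta-2(1-p)\bigr)^2+\sin^2\theta$ quoted after Lemma~\ref{lemma:grover-iterator}.

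For the first part I would simply apply Lemma~\ref{lemma:grover-iterator} with input probability $\epsilon$: by the very definition of $\Delta^*_\epsilon$ and $G^*_\epsilon$ the new probability is $p'=\epsilon\Delta^*_\epsilon$, so the output distribution is $\langle\epsilon\Delta^*_\epsilon,\,1-\epsilon\Delta^*_\epsilon\rangle$. This part is essentially a restatement of the definitions.

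For the second part, the key is that $G^*_\epsilon$ uses the \emph{same} $\epsilon$-optimal angles even when fed a system of probability $\delta$, so Lemma~\ref{lemma:grover-iterator} gives $\delta'=\delta\,\Delta$, where now $\Delta$ is evaluated at $p=\delta$ with $\theta=\alpha$ optimal for $\epsilon$. As those angles are a function of $\epsilon$ alone and $\Delta$ is an explicit function of the angle and of $\delta$, the number $\delta'$ is computable from $\delta$ and $\epsilon$, which disposes of the computability claim. To get $\delta'\ge\delta$ I would substitute $c=1-\cos\theta\in[0,2]$ and rewrite the closed form as $\Delta-1=4(1-p)\,c\,(1-pc)$; setting $p=\delta$ and $c$ equal to its $\epsilon$-optimal value $c^*_\epsilon$, the factors $1-\delta$ and $c^*_\epsilon$ are nonnegative, while the last factor stays positive because the maximisation at $p=\epsilon$ (the calculation deferred to the appendix) yields $c^*_\epsilon\le 1/(2\epsilon)$, whence $\delta c^*_\epsilon\le \delta/(2\epsilon)<1/2$ using $\delta<\epsilon$. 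Hence $\Delta\ge 1$ and $\delta'=\delta\Delta\ge\delta$. For the ``iff'' I note that when $\delta>0$ all three factors are strictly positive (since $\delta<1$ and $c^*_\epsilon>0$), so $\Delta>1$ and $\delta'>\delta$, whereas $\delta=0$ forces $\delta'=0\cdot\Delta=0=\delta$; thus $\delta'=\delta$ exactly when $\delta=0$.

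The only genuine obstacle is the inequality $\Delta\ge 1$ at $p=\delta$. Because $\Delta$ can fall below $1$ for a badly matched angle-probability pair (an over-rotated Grover step can actually \emph{lower} the success probability), one cannot argue monotonicity in the abstract; instead one must exploit the hypothesis $\delta<\epsilon$ to guarantee that the rotation tuned for the larger probability $\epsilon$ still keeps the factor $1-\delta c^*_\epsilon$ positive when applied to $\delta$. Everything else is routine bookkeeping around the formula of Lemma~\ref{lemma:grover-iterator}.
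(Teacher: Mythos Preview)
Your argument is correct and in fact more explicit than what the paper provides: the paper states Corollary~\ref{cor:opt-bb-transform} without proof, treating it as an immediate consequence of Lemma~\ref{lemma:grover-iterator}, Definition~\ref{defn:opt-bb-transform}, and the entries of Table~\ref{table:delta}. Your factorisation $\Delta-1=4(1-p)\,c\,(1-pc)$ with $c=1-\cos\theta$ does not appear in the paper but is exactly the right tool: it simultaneously identifies the maximiser $c^*_\epsilon=\min\{1/(2\epsilon),2\}\le 1/(2\epsilon)$ (recovering Table~\ref{table:delta}) and delivers the inequality $\Delta\ge 1$ at $p=\delta$ via $\delta c^*_\epsilon\le \delta/(2\epsilon)<1/2$. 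One minor quibble: the appendix you point to only verifies the closed form for $\Delta$, not the maximisation itself; the optimal angles are recorded in Table~\ref{table:delta}, or, more cleanly, follow from your own identity.
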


\begin{table}
    \begin{tabular}{|c|c|c|c|}
	\hline
	\parbox[c][.7cm]{2.5cm}{\centering Range of initial probability $p$} & Optimum $\alpha=\theta$ & Relative increase
	$\frac{p'}{p}=\Delta^*_p$ & \parbox[c][1cm]{3cm}{\centering Amplified probability $p' = p \Delta^*_p$}\\
	\hline
	$p = 0.5$ & $\pi/2$ & 2 & 1 \\
	\hline
	$0.25 \le p \le 0.5$ & $\arccos \left( 1-\frac{1}{2p} \right)$ &
	$\frac{1}{p}$ & 1\\
	\hline
	$p \le 0.25$ & $\pi$ & $(3-4p)^2 \ge 4$ & $p(3-4p)^2 \ge 4p$ \\
	\hline
    \end{tabular}
    \caption{Optimum Grover iterator for different values of initial
probability\label{table:delta}}
\end{table}

In the next few subsections, we prove Separability
Lemma for different values of $\epsilon$.

\subsection{$\BB_\epsilon$ for $\epsilon \in [1/4,1/2]$}\label{subsection:bb-half}
This is the simplest of all cases, to $\BB$-transform
$(\delta,\epsilon)$-separable $\S_1$ to a $(\delta',1)$-separable one, for any
$1/4 \le \epsilon \le 1/2$ and for some
$\delta \le \delta'$. We can clearly use $\BB_{\epsilon} = \BB^*_{\epsilon}$ defined in
Definition~\ref{defn:opt-bb-transform}.
Separability Lemma immediately follows from Corollary~\ref{cor:opt-bb-transform} and
    Table~\ref{table:delta}.



\subsection{$\BB_{\epsilon}$ for $\epsilon > \half$}\label{subsection:bb-half+}
We use the idea proposed by Brassard et al.~\cite{BHMT} to first convert
$\S_1$ to a $(\delta',\half)$-separable $\S_2$; let $\BB^+_\epsilon$ denote
this transformation which is illustrated in Equation~\ref{eqn:bb-half+}.
This involves an
additional qubit in state $\ket{0}$ and an additional projective operator 
$\Proj_\epsilon = \langle P_\epsilon^0, I - P_\epsilon^0 \rangle$,
where, $$P_\epsilon^0 = \tfrac{1}{2\epsilon}\ketbra{0}{0} +
\sqrt{1-\tfrac{1}{2\epsilon}}\sqrt{\tfrac{1}{2\epsilon}}\ketbra{1}{0} + 
\sqrt{1-\tfrac{1}{2\epsilon}}\sqrt{\tfrac{1}{2\epsilon}}\ketbra{0}{1} + 
\left(1-\tfrac{1}{2\epsilon}\right) \ketbra{1}{1}$$


Then we convert $\S_2$ to a $(\delta'',1)$-separable $\S_3$ by using
$\BB_{\half}$ (see Subsection~\ref{subsection:bb-half}). Combining both of these, we
propose the following transformation for $\BB_\epsilon$. Here $\Proj'$ denotes
$\Proj \otimes \Proj_\epsilon$.
\begin{align}
    \big\langle \ket{\psi}, C, \Proj \big\rangle &
    \stackrel{\BB^+_\epsilon}{\longrightarrow} \big \langle \ket{\psi} \otimes \ket{0}, C \otimes I, \Proj' \big\rangle
    \stackrel{\BB_{\half}}{\longrightarrow} \big \langle \ket{\psi} \otimes \ket{0}, G^*_{1/2}\big(C \otimes I,
    \ket{\psi}\otimes \ket{0}, \Proj' \big), \Proj' \big\rangle
    \label{eqn:bb-half+}
\end{align}

\begin{proof}[Proof of Separability Lemma:]
    The transformation from $\S_2$ to $\S_3$ was shown to be correct in
    Subsection~\ref{subsection:bb-half}. Correctness of
    $\BB^+_\epsilon$ follows from the fact that the probability of measuring
    outcome $0$ on the state $\ket{0}$ is $\frac{1}{2\epsilon}$ (since $\half <
    \epsilon \le 1$, $\half \le \frac{1}{2\epsilon} < 1$). Let $p$ denote the probability of measuring outcome $E$
    for some $\Q = \langle \ket{\psi}, C, \Proj \big\rangle \in \S_1$ and let
    $p'$ denote the same
    probability for the QS $\langle \ket{\psi} \otimes \ket{0}, C \otimes I, \Proj
    \otimes \Proj_\epsilon \rangle$ of $\S_2$. Observe that, if $p=0$, then
    $p'=0$; furthermore, if $p = \epsilon > \half$, then $p'=\epsilon
    \frac{1}{2\epsilon} = \half$. Of course, the transformation does not
    depend upon $\delta$.
\end{proof}

\subsection{$\BB_{\epsilon}$ for $\epsilon < \quarter$}\label{subsection:bb-small}
To transform $(\delta,\epsilon)$-separable $\S_1$ to $(\delta',1)$-separable one,
we first repeatedly apply the optimum Grover iterator enough number of times to 
amplify $\epsilon$ beyond $\quarter$ and then apply a suitable
$\BB_{\epsilon_k}$ from Subsection~\ref{subsection:bb-half}.

Suppose $\epsilon < 1/4$.
Let $\epsilon_1 = \epsilon \Delta^*_\epsilon$, $\epsilon_2
= \epsilon_1 \Delta^*_{\epsilon_1}$, $\epsilon_3 = \epsilon_2
\Delta^*_{\epsilon_2}, \cdots$. Let $k$ be the smallest integer such that
$\epsilon_k \ge 1/4$; clearly, $\epsilon_1, \ldots, \epsilon_{k-1} < 1/4$ and
$\epsilon_k \in [1/4,1/2]$.
We define $\BB_\epsilon$ as the $k$ transformations $\BB_{\epsilon}^*,
\BB^*_{\epsilon_1}, \BB^*_{\epsilon_2}, \ldots \BB^*_{\epsilon_{k-1}}$ applied successively and then
followed by $\BB_{\epsilon_k}$.
\begin{align*}
    \BB_\epsilon: \langle \ket{\psi}, C, \Proj \rangle & \substack{~\BB^*_\epsilon~\\\longrightarrow} \langle
    \ket{\psi}, C_1, \Proj \rangle & \mbox{output
dist.}=\pair{\epsilon_1}{1-\epsilon_1} \mbox{ \& } C_1 = G^*_\epsilon(C,
    \ket{\psi},\Proj)\\
    & \substack{~\BB^*_{\epsilon_1}~\\\longrightarrow} \langle \ket{\psi}, C_2,
    \Proj \rangle & \mbox{output dist.}=\pair{\epsilon_2}{1-\epsilon_2}
    \mbox{ \& } C_2 = G^*_{\epsilon_1}(C_1,
    \ket{\psi},\Proj)\\
    & \substack{~\BB^*_{\epsilon_2}~\\\longrightarrow} \cdots & \dots \\
    & \substack{~\BB^*_{\epsilon_{k-1}}~\\\longrightarrow} \langle \ket{\psi}, C_k,
    \Proj \rangle & \mbox{output dist.}=\pair{\epsilon_k}{1-\epsilon_k}
    \mbox{ \& } C_{k} = G^*_{\epsilon_{k-1}}(C_{k-1},
    \ket{\psi},\Proj)\\
    & ~\substack{\BB_{\epsilon_k}\\\longrightarrow} \langle
    \ket{\psi'}, C_{k+1}, \Proj' \rangle &
\end{align*}

\begin{proof}[Proof of Separability Lemma:]
    Satisfiability Lemma is easily proved by observing that $\epsilon_k \in
    [1/4,1/2]$ and so, applying $\BB_{\epsilon_k}$ (from
    Subsection~\ref{subsection:bb-half}) at the last step ensures that the final QS
    has output distribution $\pair{1}{0}$.
It is also easy to check that these output distributions remain unchanged if and only
if $\delta = 0$.
\end{proof}

\subsection{Performance Evaluation}
Even though we propose a recursive approach to reduce error-probability of exact
error quantum systems, we show that our approach is essentially same as the
existing iterative approaches for amplitude amplification in terms of the number
of calls to $C$ and $C^\dagger$.

Take any quantum system $QS =
\langle \ket{\psi}, C, \Proj \rangle$. The existing approaches~\cite{BHMT,Hoyer2000} repeatedly apply the iterative Grover
operator
$\Q = (C S_{\ket{\psi}} C^\dagger S_\Proj)$ (generalized to act on input encoded as the initial state and output
state to be measured by any projective operator) on $C \ket{\psi}$. Here $S_{\ket{\psi}}$ and $S_\Proj$
modify the phase of certain states by $\theta=\alpha=\pi$ as specified in
Subsection~\ref{subsection:grover}.

Let $\epsilon$ denote the probability of
observing outcome $E$; let $\beta \in [0,\pi/2]$ be such that $\sin^2 \beta =
\epsilon$. Then, the probability of observing $E$ on repeated applications of
$\Q$, say $b$ times, on $C \ket{\psi}$  (i.e., on the output state of $\Q^b C
\ket{\psi}$) can be shown to be $\sin^2 \big( (2b+1)\beta \big)$.

As shown in Table~\ref{table:delta}, suitably choice of phases in
$S_{\ket{\psi}}$ and $S_\Proj$ can amplify any $\epsilon \in [0.25,1]$ to 1
using a $\BB$-transform that effectively corresponds to one application of
$\Q$ on $C\ket{\psi}$. So, if $\epsilon \ge 0.25$, our recursive method and the
iterative approach are identical.

So, we will now analyze $\BB_{\epsilon}$ for $\epsilon < 0.25$, in fact, for $\epsilon
\ll 0.25$. Let $k$ be the number of $\BB^*$-transforms required.
Recall from Subsection~\ref{subsection:bb-small} that $\BB_\epsilon$ keeps the input
and the projective operator unchanged and converts $C$ to some $C_{k+1}$ via
intermediate circuits $C_1, C_2, \ldots, C_k$ where
$C_{j+1} = G^*_{\epsilon_j}(C_j,
\ket{\psi}, \Proj)$ for $\epsilon < \epsilon_1 < \ldots < \epsilon_k \in
[1/4,1/2]$. The $S_{\ket{\psi}}$ and $S_\Proj$ operators in those $G^*$ are defined
using phases $\theta=\alpha=\pi$ as per Table~\ref{table:delta}.

\begin{restatable}{lemma}{ckinduction}\label{lemma:c_k_induction}
    For any $j \in [1,k]$, $C_j = \Q^{\frac{3^j-1}{2}} C$.
\end{restatable}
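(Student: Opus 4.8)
The plan is to prove the formula $C_j = \Q^{\frac{3^j-1}{2}} C$ by induction on $j$, where $\Q = C S_{\ket{\psi}} C^\dagger S_\Proj$ is the standard (non-generalized) Grover operator with phases $\theta = \alpha = \pi$. The key structural observation is that each optimal Grover iterator in the $\epsilon < 1/4$ regime uses exactly these phases (as recorded in Table~\ref{table:delta}, where the optimum is $\alpha = \theta = \pi$), so that $G^*_{\epsilon_j}(C_j, \ket{\psi}, \Proj) = C_j S_{\ket{\psi}} C_j^\dagger S_\Proj C_j$. The strategy is therefore to substitute the inductive hypothesis for $C_j$ into this expression and verify that the result collapses to a clean power of $\Q$ applied to $C$.

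First I would establish the base case $j = 1$: by definition $C_1 = G^*_\epsilon(C, \ket{\psi}, \Proj) = C S_{\ket{\psi}} C^\dagger S_\Proj C = \Q C$, which matches $\Q^{\frac{3^1 - 1}{2}} C = \Q^1 C$. Next, assuming $C_j = \Q^{m} C$ with $m = \frac{3^j - 1}{2}$, I would write out
\begin{align*}
    C_{j+1} &= C_j\, S_{\ket{\psi}}\, C_j^\dagger\, S_\Proj\, C_j.
\end{align*}
The crux is to simplify the middle block $C_j S_{\ket{\psi}} C_j^\dagger$. Here I expect the main obstacle: I need to show that conjugating $S_{\ket{\psi}}$ by $C_j = \Q^m C$ still yields a reflection-type operator that meshes with the $\Q$ recursion, rather than something that depends on the particular power $m$ in an unwieldy way. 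The clean resolution is to regroup directly: substituting $C_j = \Q^m C$ gives $C_{j+1} = \Q^m C\, S_{\ket{\psi}}\, C^\dagger (\Q^m)^\dagger\, S_\Proj\, \Q^m C$, and since $\Q$ is unitary we must recognize how $(\Q^m)^\dagger S_\Proj \Q^m$ combines with the surrounding factors.

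The combinatorial target is to show $\frac{3^{j+1} - 1}{2} = 3 \cdot \frac{3^j - 1}{2} + 1 = 3m + 1$, i.e.\ that one application of the optimal iterator with these phases contributes a factor of $\Q^{2m+1}$ to the existing $\Q^m$, tripling the exponent and adding one. I would verify this by observing that $C S_{\ket{\psi}} C^\dagger S_\Proj = \Q$ (the single base Grover step), so that $C_{j+1} = \Q^m (C S_{\ket{\psi}} C^\dagger) S_\Proj \Q^m C = \Q^m (\Q C^{\dagger -1}\cdots)$ — the precise bookkeeping is where care is needed. The cleanest argument treats $C_j S_{\ket\psi} C_j^\dagger = I - (1-e^{\I\pi})\ketbra{\psi_j'}{\psi_j'}$ where $\ket{\psi_j'} = C_j \ket{\psi}$, recognizes this as the reflection that makes $G^*_{\epsilon_j}$ a genuine Grover iterate on the system $\langle \ket\psi, C_j, \Proj\rangle$, and then unfolds $C_j = \Q^m C$ so that the three consecutive applications of $C_j$ supply the $3m$ term while the internal reflections supply the $+1$. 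I expect the arithmetic identity $3m + 1 = \frac{3^{j+1}-1}{2}$ and the reflection-regrouping to be the only real content; the unitarity of $\Q$ and the fixed phases do the rest.
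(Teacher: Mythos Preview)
Your overall plan---induction on $j$, with base case $C_1=\Q C$ and inductive step substituting $C_j=\Q^m C$ into $C_{j+1}=C_j S_{\ket{\psi}} C_j^\dagger S_\Proj C_j$ to obtain $\Q^{3m+1}C$---is exactly the paper's approach. However, there is a genuine gap at the point you yourself flag (``the precise bookkeeping is where care is needed''). In your attempted verification you silently drop the factor $(\Q^m)^\dagger$, writing $C_{j+1}=\Q^m (C S_{\ket{\psi}} C^\dagger) S_\Proj \Q^m C$ when the honest substitution gives $\Q^m C S_{\ket{\psi}} C^\dagger (\Q^m)^\dagger S_\Proj \Q^m C$; without handling that adjoint the exponent count cannot close. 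Neither of your two suggested routes---conjugating $S_\Proj$ by $\Q^m$, or viewing $C_j S_{\ket{\psi}} C_j^\dagger$ abstractly as a reflection about $\ket{\psi_j'}$---resolves this.

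The missing ingredient, which the paper supplies, is that at $\theta=\alpha=\pi$ both $S_{\ket{\psi}}$ and $S_\Proj$ are self-adjoint (indeed involutions), so $\Q^\dagger = S_\Proj C S_{\ket{\psi}} C^\dagger$, and by simple re-bracketing $(\Q^\dagger)^m = S_\Proj\,\Q^{m-1}\,(C S_{\ket{\psi}} C^\dagger)$. Substituting this yields
\[
C_{j+1}=\Q^m\cdot(C S_{\ket{\psi}} C^\dagger S_\Proj)\cdot\Q^{m-1}\cdot(C S_{\ket{\psi}} C^\dagger S_\Proj)\cdot\Q^m C=\Q^{m+1+(m-1)+1+m}C=\Q^{3m+1}C,
\]
which is precisely the $3m+1=\tfrac{3^{j+1}-1}{2}$ you were targeting. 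The self-adjointness of the phase operators at $\pi$ is what makes the adjoint powers collapse; once you state that identity, the rest of your sketch is correct.
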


This lemma can be easily proved by induction on $k$ (see Appendix).
It shows that the final circuit obtained by our recursive approach is identical
to that obtained by apply a fixed $\Q$ a certain number of times.
Therefore, $\epsilon_{k} = \sin^2 \left( 3^k \beta
\right)$ which must be at least $1/4$. This stipulates that $k \ge \log_3
\frac{\pi}{6\beta}$. The total number of calls to $C$ and $C^\dagger$ made by
our recursive algorithm to amplify $\epsilon < 0.25$ to some $\epsilon_k > 0.25$
can then be easily shown to be $1+\frac{\pi}{3\beta}$ (rather, the next higher
integer) -- which is exactly the same as that in $\Q^{(3^k-1)/2}C$.

\subsection{Proof of Theorem~\ref{theorem:main-separable}}
We are now ready to prove Theorem~\ref{theorem:main-separable} using
Separability Lemma.
We will use the following notation. If $\BB$ is a transformation for a set of
quantum systems $\S$, then the set of {\em transformed quantum systems} after
applying $\BB$ will be denoted by $\BB(\S)$.

%
\begin{proof}
    The given $\S$ in the theorem is $(\delta,\epsilon)$-separable. Our required
    $\BB_{\delta,\epsilon}$ will be composed of a series of $\BB$-transforms:
    $\BB_\epsilon$, $\BB_2$ and $\BB_\delta$.

    $\BB_\epsilon$ is chosen such so as to solve $QD(\S)$ with error $(\delta',1)$ for
    some $\delta < \delta'$. This step can skipped ($\BB_\epsilon$ can be set to
    identity) if $\epsilon = 1$; on the other hand, if $\epsilon < 1$,
    we can use $\BB_\epsilon$ from Lemma~\ref{lemma:fully-separable}, which
    implies that $\BB_\epsilon(\S)$ is
    $(\delta',1)$-separable for some $\delta'$ (that depends on $\delta$ and
    $\epsilon$). Let $\S_1$ denote $\BB_\epsilon(\S)$.

    $\BB_2$ is the following transform: $\llangle \ket{\psi}, C, (P_1, P_2)
    \rrangle \longrightarrow \llangle \ket{\psi}, C, (P_2, P_1) \rrangle$.
    Let $\S_2 = \BB_2(\S_1)$. Any $QS \in \S_1$ with $\mu(QS)=\mu_{\delta'}$ is
    transformed to $QS' \in \S_2$ with $\mu(QS') = 1-\delta'$ and similarly, if
    $\mu(QS) = \mu_1$, then $\mu(QS') = \mu_0$. Therefore, $\S_2$ is $(0,1-\delta')$-separable.

    By property of $\BB_\epsilon$, $\delta=\delta'=0$ if and only if $\delta=0$ and in
    that case, we have obtained $(0,1)$-separable $\S_2$. On the other hand,
    if $\delta > 0$, then $\delta' > 0$. Let $\delta''$ denote
    $1-\delta'$.
    Since $\S_2$ is $(0,\delta'')$-separable, apply
    Lemma~\ref{lemma:fully-separable} again to get $\BB_{\delta}$ such that
    $\S' = \BB_{\delta}(\S_2)$ is $(0,1)$-separable.

    Our required transform $\BB$ is a sequential application of $\BB_\epsilon$ followed
    by $\BB_2$ followed by $\BB_{\delta}$. As explained above,
    $\BB_\delta(\BB_2(\BB_\epsilon(\cdot)))$ is a $(0,1)$-solution of $QD(\S)$.
\end{proof}

\section{Uniform version of
Theorem~\ref{theorem:main-separable}}\label{section:uniform}
The non-uniformity in Definition~\ref{defn:bb-transform} is not very helpful if
we wish to obtain a true black-box extension of a quantum system $\Q = \llangle
\ket{\psi},C,\Proj \rrangle$.
Note that the extension to the input qubits and the
extension to the projective measurement operator is anyway independent of $\Q$
and $n$, the gates in $C'$ are uniform in $n$,
and furthermore, the transformed circuit $C'$ is allowed to call the original
circuit $C$ (and its inverse $C^\dagger$) in a black-box manner; however, some
of the gates in $C'$ may additionally depend upon $\ket{\psi}$ and operators of
$\Proj$. It would be really good to obtain a more uniform conversion which
necessitates the following definition.

\begin{definition}[Uniform $\BB$-transform]
    A $\BB$-transform for converting multiple QS $\{\Q_1, \Q_2, \ldots \}$ is said
    to be {\em uniform} if the circuit of $\BB(\Q_i)$ is identical for all source $\Q_i$ except for the
    calls to $C$ and $C^\dagger$ corresponding to $\Q_i$.
\end{definition}

\subsection{Uniform Grover iterator}
We want to study some
sufficient conditions for the $\BB$-transforms to be uniform by constructing a
uniform version of Grover iterator.

Since Grover iterator uses $\S_\Proj$, it is crucial to have identical measurement operators for
all quantum systems. This is, however, not such a major requirement since it is always
possible to change measurement operators by extending a quantum circuit with
suitable operators.

Except the gates $S_{\ket{\psi}} = I - (1-e^{\I\theta})\ketbra{\psi}{\psi}$ which depend upon the
corresponding input to the circuit ($\ket{\psi}$), none of the other gates
used in $\BB$-transforms that are involved in the
proof of Theorem~\ref{theorem:main-separable} depend upon the input
state (see Section~\ref{section:proof}).
However, a $\BB$-transform may still become uniform if all the inputs in $\S_1$,
and hence all such $S_{\ket{\psi}}$ gates, will be identical.

Now consider a second option -- all measurement operators are identical and all the input states are not identical but
they form an orthonormal set. We show that it is still possible to
apply $S_{\ket{\psi}}$ in a uniform manner. Recall that this gate changes the
phase of any state depending upon whether it is $\ket{\psi}$ or not and the main
difficulty appears to be the fact that the input state cannot be copied and 
stored for a later application of the conditional phase gate. So our main
idea is to convert $\ket{\psi}$ to some state in the standard basis since 
it is possible to copy and store states in the standard basis using the {\em quantum
fanout gate}~\cite{Durr1999}. This gate copies a standard basis state to
another register: $F_m \ket{x_1 \ldots x_m}\ket{b_1 \ldots b_m} = \ket{x_1 \ldots
x_m}\ket{(x_1 \oplus b_1) \ldots (x_m \oplus b_m)}$ for $x_1 \ldots x_m \in
\{0,1\}^m$ and $b_1 \ldots b_m \in \{0,1\}^m$ shows the operation for
``copying'' $m$-qubits.

\begin{figure}[!htb]
\begin{subfigure}[b]{1.2in}
    {\raisebox{8mm}{\resizebox{1.2in}{!}{\input 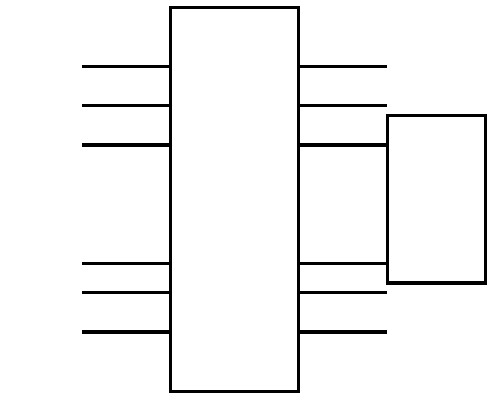_t}}}%
    \caption{$S_{\ket{\psi}}$\label{fig:s-psi-uniform-a}}
\end{subfigure}%
~
\begin{subfigure}[b]{3in}
    {\resizebox{3in}{!}{\input 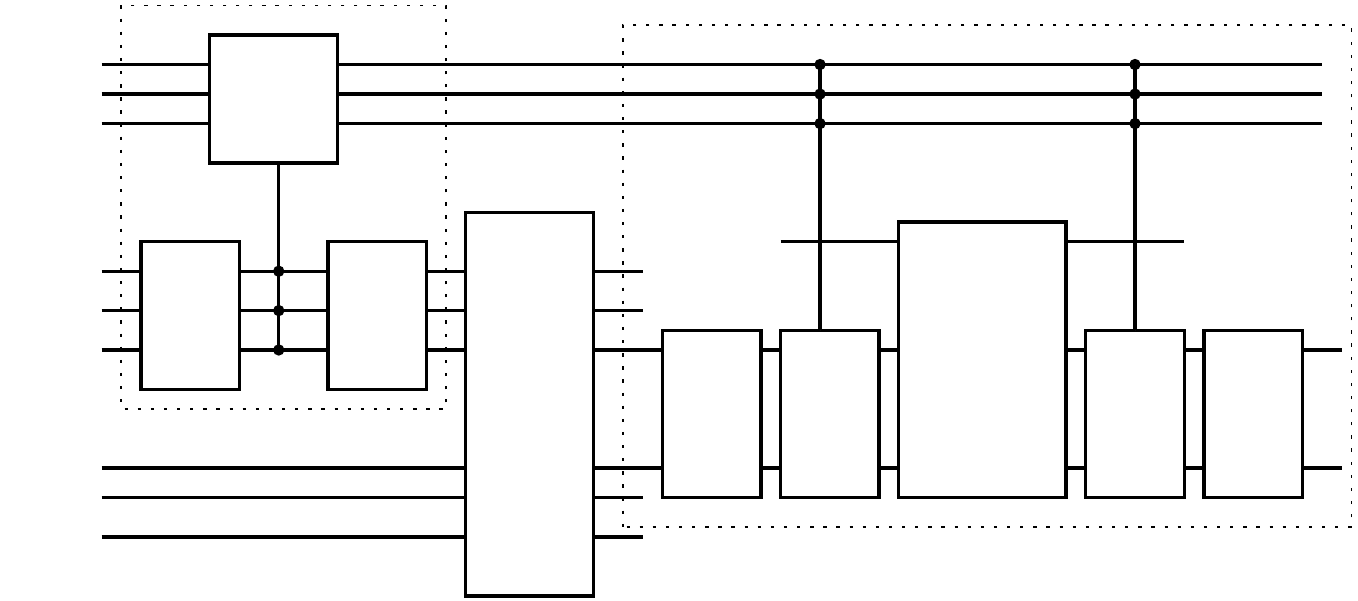_t}}%
    \caption{Uniform $S_{\ket{\psi}}$\label{fig:s-psi-uniform-b}} %
\end{subfigure}%
~
\begin{subfigure}[b]{1.2in}
    {\raisebox{1cm}{\resizebox{1.2in}{!}{\input 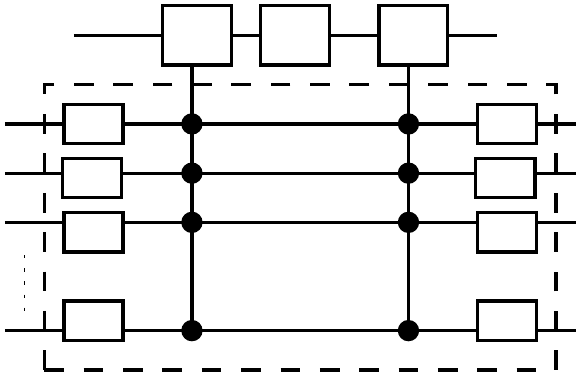_t}}}%
    \caption{$S_\theta$ operator \label{fig:s-psi-uniform-c}}
\end{subfigure}
    \caption{\label{fig:s-psi-uniform} Applying operator $S_{\ket{\psi}}$ in a
	uniform manner. Figure~\ref{fig:s-psi-uniform-a} shows the non-uniform
	operator and Figure~\ref{fig:s-psi-uniform-b} shows its uniform version
    (dotted box on the left shows initialization and dotted box on the right
    shows $S_{\ket{\psi}}$ being applied uniformly).
Figure~\ref{fig:s-psi-uniform-c} shows the $S_\theta$ operator from
Figure~\ref{fig:s-psi-uniform-b}.}
\end{figure}

See Figure~\ref{fig:s-psi-uniform} for a uniform version of $\S_{\ket{\psi}}$.
Figure~\ref{fig:s-psi-uniform-a} shows $S_{\ket{\psi}}$ as a part of an arbitrary quantum
circuit, say $C$ that takes as input an $m$-qubit state $\ket{\psi}$ (and some
ancill\ae) and $\S_{\ket{\psi}}$, on $m$-qubits, is one of its gates
Since we are now considering the case that $C$ is applied only on orthogonal
input states (suppose denoted by $\ket{\psi_1}, \ket{\psi_2},
\cdots$), therefore, there exists a one to one mapping between these states
and a subset of the $m$-qubit standard basis states $\ket{1}, \ket{2}, \cdots$.
Let $U$ denote the unitary operator for the mapping, i.e., $U\ket{\psi_v} =
\ket{v}$.

Figure~\ref{fig:s-psi-uniform-b} illustrates a circuit $C'$ that applies $S_{\ket{\psi}}$ without
requiring a gate that explicitly depends upon $\ket{\psi}$. Apart from the two
registers of $C$ (the input $\ket{\psi}$ and ancill\ae\ qubits), $C'$ also uses
$m$ additional ancill\ae\ qubits in state $\ket{0}$. Other than the standard gates
($T$ stands for the unbounded fanout Toffoli and $X$ is the quantum NOT gate),
$C'$ uses three additional gates: $F_m$, $P_\theta$ and $S_\theta$. The $F_m$ gate is
the quantum fanout gate. $P_\theta$ changes phase
of $\ket{1}$ by $e^{\I\theta}$: $P_\theta = I - (1-e^{\I\theta})\ketbra{0}{0}$.
The $S_\theta$ gate uses
an additional reusable ancill\ae\ $\ket{0}$ and changes the phase by
$e^{\I\theta}$ only for the state $\ket{0^m}$ (illustrated in
Figure~\ref{fig:s-psi-uniform-c}).

The state of the first two registers after the left dotted box in
Figure~\ref{fig:s-psi-uniform-b} is simply $\ket{0^m}\ket{\psi} \to
\ket{v}\ket{\psi}$ where $\ket{v}$ is the standard basis vector $U\ket{\psi}$.
We will next analyze the operator for the right dotted
box, say denoted by $U_R$. $S_\theta$ can be written as $I -
(1-e^{\I\theta})\ketbra{0^m}{0^m}$ and the $F_m$ operator essentially behaves
like $F_m \ket{b_1 \ldots b_m} \to \ket{(v_1 \oplus b_1), \ldots (v_m
\oplus b_m)}$.
The following calculation (for only the qubits involved) shows that the operator for the right dotted box is
identical with $S_{\ket{\psi}}$. 
\begin{align*}
U_R = & (I \otimes U^\dagger) F_m (I \otimes S_\theta) F_m	(I \otimes U)
= (I \otimes U^\dagger) F_m \big(I \otimes (I -
(1-e^{\I\theta})\ketbra{0^m}{0^m}) \big) F_m (I \otimes U)\\
= & (I \otimes U^\dagger) \big(I \otimes (I -
(1-e^{\I\theta})\ketbra{v}{v}) \big) (I \otimes U)
= I \otimes (I - (1-e^{\I\theta})\ketbra{\psi}{\psi}) = I \otimes
S_{\ket{\psi}}
\end{align*}

The results of this subsection can be summarized in the following lemma.

\begin{lemma}\label{lemma:uniform-bb-transform}
    The $\BB$-transform in Theorem~\ref{theorem:main-separable} can be made uniform if
    all projection operators in the quantum systems of $\S$ are identical and
    all input states in $\S$ are either identical or form an orthonormal set of
    states.
\end{lemma}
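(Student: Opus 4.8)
The plan is to show that, under the two stated hypotheses, every gate appearing in the composite transform $\BB_{\delta,\epsilon} = \BB_\delta \circ \BB_2 \circ \BB_\epsilon$ constructed in the proof of Theorem~\ref{theorem:main-separable} is independent of the particular source $\Q_i$, the only exceptions being the black-box calls to $C$ and $C^\dagger$. First I would catalogue which gates in that construction can a priori depend on the individual system. Inspecting Section~\ref{section:proof}, the transform is built entirely out of optimal Grover iterators $G^*$ (in the cases $\epsilon \in [1/4,1/2]$ and $\epsilon < 1/4$), the ancilla-augmenting step $\BB^+_\epsilon$ together with $\Proj_\epsilon$ (in the case $\epsilon > 1/2$), and the label-swap $\BB_2$. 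Of these, the extra ancilla $\ket{0}$, the operators $\Proj_\epsilon$ and $\Proj_a$, the phase angles $\theta=\alpha=\pi$ (and the other angles of Table~\ref{table:delta}), the iteration count $k$, and the swap $\BB_2$ all depend only on $\delta$ and $\epsilon$, hence are common to the whole collection. The only genuinely system-dependent gates are $S_\Proj = I - (1-e^{\I\alpha})P_E$ and $S_{\ket{\psi}} = I - (1-e^{\I\theta})\ketbra{\psi}{\psi}$.

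Next I would dispatch $S_\Proj$: since the hypothesis assumes all projection operators in $\S$ are identical, $P_E$ is the same for every $\Q_i$, so every occurrence of $S_\Proj$ is literally the same gate. The same remark survives $\BB_2$, which merely swaps the two outcome projectors; identical projectors remain identical after the swap. This reduces the problem to making $S_{\ket{\psi}}$ uniform.

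For $S_{\ket{\psi}}$ I would split into the two cases of the hypothesis. If all input states are identical, then $S_{\ket{\psi}}$ is already the same gate for every $\Q_i$ and nothing more is needed. If the inputs $\{\ket{\psi_v}\}$ form an orthonormal set, I would invoke the uniform construction given above for the Grover iterator: fix a single unitary $U$ with $U\ket{\psi_v} = \ket{v}$ for all $v$ (such a $U$ exists precisely because the inputs are orthonormal, and it is determined by the collection $\S$, not by any individual $\Q_i$), and replace each $S_{\ket{\psi}}$ by the circuit of Figure~\ref{fig:s-psi-uniform-b}, whose operator was verified there to equal $I \otimes S_{\ket{\psi}}$ through $U_R = (I\otimes U^\dagger)F_m(I \otimes S_\theta)F_m(I \otimes U)$. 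Since that circuit uses only $U$, $F_m$, $P_\theta$, $S_\theta$ and standard gates, none of which depends on the individual $\Q_i$, all copies of $S_{\ket{\psi}}$ become uniform. A small point I would verify is that throughout the recursive construction every $S_{\ket{\psi}}$ references the \emph{original} input $\ket{\psi}$ (indeed each intermediate circuit is $C_{j+1} = G^*_{\epsilon_j}(C_j, \ket{\psi}, \Proj)$, built on the same $\ket{\psi}$), so the single $U$ suffices at every occurrence; and in the case $\epsilon > 1/2$ the augmented inputs $\{\ket{\psi_v}\otimes\ket{0}\}$ remain orthonormal, so the construction applies verbatim with $U \otimes I$.

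The main obstacle is the orthonormal case: I must exhibit one fixed unitary $U$ that simultaneously standard-basis-encodes every input of the collection and argue that this single gadget realizes $S_{\ket{\psi}}$ on each source without ever copying an unknown input. This is exactly the content of the uniform Grover-iterator construction above, so the residual work is the bookkeeping check that no other gate in $\BB_{\delta,\epsilon}$ reintroduces a dependence on the individual $\Q_i$. With $S_\Proj$ and $S_{\ket{\psi}}$ both uniform and all remaining gates depending only on $\delta,\epsilon$ (or on the common $U$), the circuit of $\BB_{\delta,\epsilon}(\Q_i)$ is identical for every $\Q_i$ apart from its black-box calls to $C$ and $C^\dagger$, which is precisely uniformity.
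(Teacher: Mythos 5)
Your proposal is correct and follows essentially the same route as the paper: Section~\ref{section:uniform} likewise observes that $S_\Proj$ and $S_{\ket{\psi}}$ are the only system-dependent gates, dispatches the former via the identical-projector hypothesis, and realizes the latter uniformly through the fixed basis-change $U$, the fanout $F_m$ and the $S_\theta$ gadget of Figure~\ref{fig:s-psi-uniform}, with everything else depending only on $\delta$ and $\epsilon$. Your additional bookkeeping---that every recursive level $C_{j+1}=G^*_{\epsilon_j}(C_j,\ket{\psi},\Proj)$ reflects about the \emph{same} $\ket{\psi}$ so one $U$ suffices, and that the augmented inputs $\ket{\psi_v}\otimes\ket{0}$ in the $\epsilon>\half$ case remain orthonormal---consists of correct refinements that the paper leaves implicit.
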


\section{Distinguishing two circuits}\label{section:distinguish-circuits}
Suppose we are given a quantum circuit $C$ (as black-box) and two
different operators $C_1$ and
$C_2$, all acting on the same Hilbert space, and we are told that the operator for $C$ is either $C_1$ or $C_2$. We have to
determine $C$ corresponds to which one. We assume that we also have access to its inverse operator $C^\dagger$.

The analogous problem for deterministic (classical) functions is
trivial. Two distinct functions must differ at some input which can be
determined from their function descriptions (the problem is NP-hard
but we are not concerned about feasibility, not efficiency, for this discussion). The output of
$C$ on this input will identify whether $C$ is $C_1$ or $C_2$.
However, if $C$ is a randomized circuit or
algorithm, then except for a few trivial cases, the output of $C$
generates a sample distribution over the output of $C_1$ and $C_2$; the
question of determining the correct distribution of $\C$ without any error is believed
to be hard, if not impossible.

However, it is possible to give a positive answer to the same question for
quantum circuits.
Select a suitable $\ket{\phi}$ and compute the two possible output states
${\ket{\psi_1} = C_1\ket{\phi}}$, ${\ket{\psi_2} = C_2\ket{\phi}}$. Choose
projective operators $\Proj=\langle I - \ketbra{\psi_1}{\psi_1},
\ketbra{\psi_1}{\psi_1} \rangle$ with respective outcomes $E$ and $F$.

Consider these two quantum systems: $\langle \ket{\phi},
C_1, \Proj \rrangle$ and $\langle \ket{\phi}, C_2, \Proj \rrangle$.
The output distribution of the first QS is $\pair{0}{1}$ and that of the second
is $\pair{\epsilon}{1-\epsilon}$ where $\epsilon =
1-|\braket{\psi_1}{\psi_2}|^2 > 0$.

Now, Theorem \ref{theorem:main-separable} can be applied on the QS $\langle
\ket{\phi}, C, \Proj \rangle$ which essentially gives us a circuit $C'$ (that
calls $C$ and $C^\dagger$) along with suitably extended input and
measurement operators,
with the property that if the outcome of the QS is $E$, then $C$ is surely $C_1$ and
otherwise $C_2$.

It is perfectly okay to use any $\ket{\phi}$ as the input state; however, since
the size of $C'$ depends inversely upon $\epsilon$ so it makes sense to have the
largest possible $\epsilon$. A recent result~\cite{2015BeraATPG} can be used to
determine the optimum initial state (details of this is presented in the
Appendix).

\paragraph*{Single-fault detection}
Fault detection is a major step in the workflow of circuit
fabrication. It is common in research and industry to assume that practically
most faults appear according to a few known fault models.
A standard approach to detecting if a circuit is faulty is to generate a set of
test patterns (inputs) such that the output of a fault-free circuit would be
different from that of a faulty-circuit. This method is known as ATPG (automatic
test-pattern generation) and is well-studied for classical
circuits and very
recently, seeing use even for quantum circuits~\cite{paler2011tomographic}.

ATPG is computationally difficult being NP-hard~\cite{IbarraSahniATPG}, and even harder
for quantum circuits because the measurement output of these circuits is
probabilistic, and hence even a single test pattern will generate a distribution
over possible outcomes.

However, the technique described earlier in this section can come to our rescue
in the special case of only one fault model, i.e., given a circuit $C$ as a
black-box unit, we wish to determine if $C$ is fault-free (i.e., $C=C_1$) or $C$
is faulty (with fault model $C_2$). We can reliably answer this question without
any chance of error using the approach described above.

\section{Exact Error Algorithms}\label{section:exact-error-classes}
Usual probabilistic classes like \RP and \BPP are defined in terms of errors that are upper bounded
by constants. They are rarely defined in terms of exact error, primarily due to
the lack of robustness in definition that accompanies this concept. There is no known
technique to show that the class of problems with one-sided error exactly same as $0.3$
remains unchanged if the error is instead $0.301$.
%
Consider, for example, the simplified class \ERP ($\P \subseteq \ERP \subseteq
\RP$) whose problems have randomized algorithms similar to those
for \RP, but with an additional requirement that the error is same for
all ``no'' instances (of any length). We similarly define \EBPP as the class of
problems with exact two-sided error polymomial-time algorithms.
Based on what we know, $\P \not= \ERP \not= \EBPP$.
However, we were able to prove that the
quantum analogs of these classes have identical complexity
using our generalization of quantum amplitude amplification.

\begin{definition}
    $\EBQP_{\delta,\epsilon}$ is the class of languages $L$ for which there
    exists a uniform family of polynomial-size quantum circuits $\{C_n\}$, a uniform family of
    states for $a_n$ ancill\ae\ qubits
$\ket{A_n}$ and a uniform family of two-outcome projective measurement operators
$\{\Proj_n\}$ such that $C_n$ and $\Proj_n$ act on a space of $n+a_n$ qubits and
the following hold for any $x \in \{0,1\}^n$, $\forall n$:
\begin{itemize}
    \item if $x \not\in L$, then the output distribution of $\llangle \ket{x}
	\otimes \ket{A_n}, C_n, \Proj_n \rrangle$ is $\mu_\delta$ (i.e., when
	the output state of $C_{n}$ on input state $\ket{x} \otimes \ket{A_n}$ is measured
	using $\Proj_n$, outcome $E$ is observed with probability $\delta$) and
    \item if $x \in L$, then the output distribution of $\llangle \ket{x}
	\otimes \ket{A_n}, C_n, \Proj_n \rrangle$ is $\mu_\epsilon$ (i.e.,
	outcome $E$ is observed with probability $\epsilon$ upon similar
	measurement as the above case).
\end{itemize}
\noindent$\ERQP_\epsilon$ is simply $\EBQP_{0,\epsilon}$.
Define $\displaystyle\EBQP = \bigcup_{\epsilon>\delta\ge 0}
    \EBQP_{\delta,\epsilon}$ and $\displaystyle\ERQP =
    \bigcup_{\epsilon>0} \ERQP_\epsilon$.
\end{definition}

Note that, unlike the usual definitions of probabilistic classes, for these
classes it is not even clear if the different classes
$\EBQP_{\delta,\epsilon}$ for different $\delta$ and $\epsilon$ are identical.
However, the following lemma is obvious from these definitions.
\begin{lemma}
    $\EQP = \EBQP_{0,1} = \ERQP_1$ and
    $\EQP \subseteq \ERQP \subseteq \EBQP$.
\end{lemma}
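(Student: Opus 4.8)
The plan is to verify each asserted (in)equality by directly unwinding the definitions, since the only class requiring external identification is $\EQP$. I would begin with the chain of equalities $\EQP = \EBQP_{0,1} = \ERQP_1$. The rightmost equality $\EBQP_{0,1} = \ERQP_1$ is immediate from the stipulation $\ERQP_\epsilon = \EBQP_{0,\epsilon}$ specialized to $\epsilon = 1$, so nothing is to be done there. For $\EQP = \EBQP_{0,1}$ I would argue both inclusions. If $L \in \EBQP_{0,1}$, then on $x \notin L$ the outcome $E$ occurs with probability exactly $0$ and on $x \in L$ with probability exactly $1$; the projective operator $\Proj_n = \langle P_E, I - P_E \rangle$ can be turned into a standard-basis decision by appending the unitary that rotates the support of $P_E$ into a distinguished set of computational-basis states and then flagging that subspace onto a fresh output qubit with a CNOT, yielding a zero-error (exact) quantum decision procedure, i.e.\ an $\EQP$ algorithm. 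Conversely, any $\EQP$ algorithm, regarded as a quantum system $\llangle \ket{x} \otimes \ket{A_n}, C_n, \Proj_n \rrangle$ whose $\Proj_n$ projects onto the accepting value of the output qubit (with that outcome labelled $E$), witnesses $L \in \EBQP_{0,1}$ with $\delta = 0$ and $\epsilon = 1$. The uniformity and ancilla conditions transfer unchanged under both translations.

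For the inclusion chain $\EQP \subseteq \ERQP \subseteq \EBQP$, I would appeal to the first part together with the definitions of the two unions. Since $\EQP = \ERQP_1$ and $1 > 0$, the class $\ERQP_1$ is one of the terms in $\ERQP = \bigcup_{\epsilon > 0} \ERQP_\epsilon$, giving $\EQP \subseteq \ERQP$. For the second inclusion, each $\ERQP_\epsilon = \EBQP_{0,\epsilon}$ with $\epsilon > 0$ satisfies $0 = \delta < \epsilon$, so it appears among the terms of $\EBQP = \bigcup_{\epsilon > \delta \ge 0} \EBQP_{\delta,\epsilon}$; taking the union over all $\epsilon > 0$ yields $\ERQP \subseteq \EBQP$.

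The only step that is not pure symbol-pushing is the identification $\EQP = \EBQP_{0,1}$, where the paper's projective-measurement presentation must be reconciled with the standard single-qubit computational-basis convention for $\EQP$; this is the part I would write out most carefully, though it poses no real difficulty because a two-outcome projective measurement can always be simulated by a unitary change of basis followed by a computational-basis measurement, keeping the circuit polynomial-size and uniform.
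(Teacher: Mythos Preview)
Your proposal is correct and matches the paper's approach: the paper states that the lemma ``is obvious from these definitions'' and gives no proof at all, so your careful definition-unwinding is exactly the argument the paper has in mind but leaves implicit. The only place you add substance is the $\EQP = \EBQP_{0,1}$ equivalence via converting a general two-outcome projective measurement to a standard-basis one, which is a routine observation and well within what the paper is taking for granted.
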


The main result of this section is a simple application of
Theorem~\ref{theorem:main-separable} and Lemma~\ref{lemma:uniform-bb-transform}.
\begin{theorem}\label{theorem:eqp-ebqp}
    $\EQP = \ERQP = \EBQP$.
\end{theorem}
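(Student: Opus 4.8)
The plan is to establish $\EBQP \subseteq \EQP$, since the reverse chain $\EQP \subseteq \ERQP \subseteq \EBQP$ is already provided by the preceding lemma; together these force all three classes to coincide. So I would begin with an arbitrary $L \in \EBQP$, unfold the definition to obtain $L \in \EBQP_{\delta,\epsilon}$ for some fixed $\epsilon > \delta \ge 0$, witnessed by uniform families $\{C_n\}$, $\{\ket{A_n}\}$ and $\{\Proj_n\}$. The essential point is that $\delta$ and $\epsilon$ are constants independent of the instance $x$, so the amplification machinery of Theorem~\ref{theorem:main-separable} can be deployed with parameters that vary with neither $n$ nor $x$.

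Next I would fix $n$ and bundle all instances of length $n$ into one collection of quantum systems, $\S_n = \{ \Q_x : x \in \{0,1\}^n \}$, where $\Q_x = \llangle \ket{x} \otimes \ket{A_n}, C_n, \Proj_n \rrangle$. By the defining property of $\EBQP_{\delta,\epsilon}$, each $\Q_x$ has output distribution $\mu_\delta$ when $x \notin L$ and $\mu_\epsilon$ when $x \in L$, so $\S_n$ is $(\delta,\epsilon)$-separable. Applying Theorem~\ref{theorem:main-separable} yields a $\BB$-transform $\BB_{\delta,\epsilon}$ producing a $(0,1)$-separable collection: the transformed system $\Q'_x = \BB_{\delta,\epsilon}(\Q_x)$ never outputs $E$ when $x \notin L$ and always outputs $E$ when $x \in L$. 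Since $\BB_{\delta,\epsilon}$ only extends the input by fixed ancill\ae\ and the measurement by a fixed $\Proj_a$, the system $\Q'_x$ is precisely a witness of the form required by $\EBQP_{0,1} = \EQP$ — reading its outcome decides membership in $L$ with zero error, provided the transformed circuit forms a uniform polynomial-size family.

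Uniformity is the crux, and I would discharge it through Lemma~\ref{lemma:uniform-bb-transform}. That lemma asks that all measurement operators coincide and that all input states be either identical or orthonormal. Both hold by construction: every $\Q_x$ shares the single operator $\Proj_n$, and the inputs $\ket{x} \otimes \ket{A_n}$ are mutually orthonormal because the $\ket{x}$ are distinct standard-basis vectors and $\ket{A_n}$ is a fixed ancilla state. Hence the only input-dependent gate, $S_{\ket{\psi}}$, can be realized uniformly by the fanout-based construction of Section~\ref{section:uniform}; here the unitary $U$ sending each input to a standard-basis label is simply the inverse of the (uniform, polynomial) circuit that prepares $\ket{A_n}$, tensored with the identity on the $\ket{x}$ register, so that $\ket{x} \otimes \ket{A_n} \mapsto \ket{x} \otimes \ket{\mathbf{0}}$. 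I would then verify the size bound: the performance analysis shows $\BB_{\delta,\epsilon}$ invokes $C_n$ and $C_n^\dagger$ only $O_{\delta,\epsilon}(1)$ times — a count fixed by $\delta,\epsilon$ alone — and adds only uniformly generated $F_m$, $P_\theta$, $S_\theta$ and $U$ gates of polynomial size, so $\{C'_n\}$ is a uniform polynomial-size family. This places $L$ in $\EQP$.

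I expect the main obstacle to be precisely this uniformity argument rather than the amplification itself: the raw $\BB$-transform is non-uniform because its phase gate $S_{\ket{\psi}}$ depends on $\ket{\psi}$, and it is only the special orthonormal structure of the inputs $\ket{x} \otimes \ket{A_n}$ — combined with a shared, fixed measurement — that lets Lemma~\ref{lemma:uniform-bb-transform} convert the construction into a genuine uniform circuit family. Once uniformity and the constant call-count are secured, the inclusion $\EBQP \subseteq \EQP$, and hence $\EQP = \ERQP = \EBQP$, follows immediately.
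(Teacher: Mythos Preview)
Your proposal is correct and follows essentially the same approach as the paper: reduce to $\EBQP \subseteq \EQP$, package the length-$n$ instances into the collection $\S_n = \{\Q_x\}$, invoke Theorem~\ref{theorem:main-separable} together with Lemma~\ref{lemma:uniform-bb-transform} (using that the $\ket{x}\otimes\ket{A_n}$ are orthonormal and $\Proj_n$ is shared), and conclude via the constant call-count. Your additional remarks on how the unitary $U$ is realized and on the polynomial-size bound are helpful elaborations of details the paper leaves implicit.
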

\begin{proof}
We essentially need to show that $\EBQP \subseteq \EQP$. To prove this we will 
show that for any $L$, if $L \in
\EBQP_{\delta,\epsilon}$ (for any $\epsilon > \delta \ge 0$), then $L \in
\EBQP_{0,1}$.

Fix an arbitrary $n$. 
For any binary string $x$ of length $n$, define the quantum system $\Q_x = \llangle \ket{x} \otimes \ket{A_n}, C_n, \Proj_n \rrangle$
where $\ket{A_n}$, $C_n$ and $\Proj_n$ are obtained from the definition of
$\EBQP_{\delta,\epsilon}$ and the fact that $L \in \EBQP_{\delta,\epsilon}$.
Now consider these sets of quantum systems $\S_n = \{\Q_x ~:~ x \in \{0,1\}^n \}$ for all $n
> 0$. Clearly, there are two possible output distributions of any $\S_n$, namely,
$\mu_\delta$ and $\mu_\epsilon$. Since the input states in $\S_n$ are
orthonormal and the projection operators therein are identical, we can therefore apply
Theorem~\ref{theorem:main-separable} and Lemma~\ref{lemma:uniform-bb-transform}
to obtain a uniform transformation $\BB_{\delta,\epsilon}$ which
perfectly solves the problem of $QD(\S_n)$. Let $\BB_{\delta,\epsilon}(\Q_x) =
\Q'_x = \llangle \ket{x} \otimes\ket{A_n} \otimes \ket{00\ldots 0}, C'_n, \Proj'_n
\rrangle$ which gives us (i) a circuit $C'_n$ which calls $C_n$ (and
$C_n^\dagger$) (ii) a two-outcome projective measurement operator $\Proj'_n$ and
a (iii) set of ancill\ae\ qubits in state $\ket{\mathbf{00\ldots 0}}$
such that the following holds for the outcome of $C'_n$ on $\ket{x} \otimes
\ket{A_n} \otimes \ket{00\ldots 0}$ when measured using $\Proj'_n$.
\begin{itemize}
    \item If $x \not\in L$, then the output distribution of $\Q'_x$ is $\mu_0$,
	i.e., the outcome is never $E$.
    \item If $x \in L$, then the output distribution of $\Q'_x$ is $\mu_1$,
	i.e., the outcome is always $E$.
\end{itemize}

Therefore, we get a uniform family of circuits $\{C'_n\}$, a uniform family of
ancill\ae\ qubits $\ket{A_n} \otimes \ket{\mathbf{00\ldots 0}}$ and a uniform family of two-outcome
projective measurement operator $\{\Proj'_n\}$ such that the outcome of
$C'_{|x|}$ on any $\ket{x}$, with additional ancill\ae\ qubits in a uniformly
generated state, when measured by $\Proj'_{|x|}$ indicates whether $x \in L$
without any probability of error. 
Since $C_n'$ uses constantly many calls to $C_n$ and $C_n^\dagger$ along with other
gates (the constant depends only on 
$\delta$ and $\epsilon$), this shows that $L \in \EBQP_{0,1}$.
\end{proof}

\begin{figure*}[!h]
\centering \resizebox{0.65\linewidth}{!}{\input 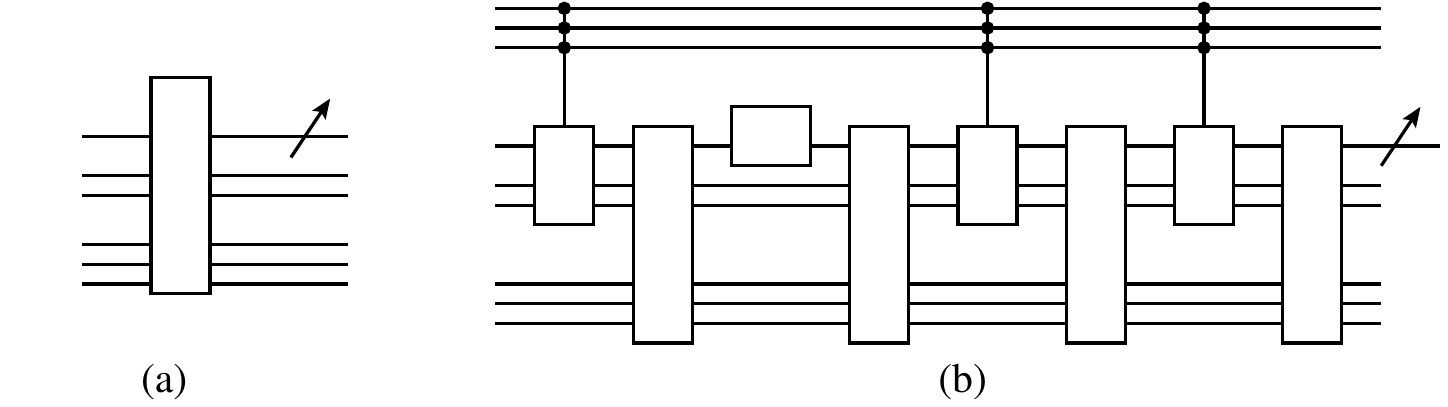_t} \qquad %
\resizebox{0.25\linewidth}{!}{\input 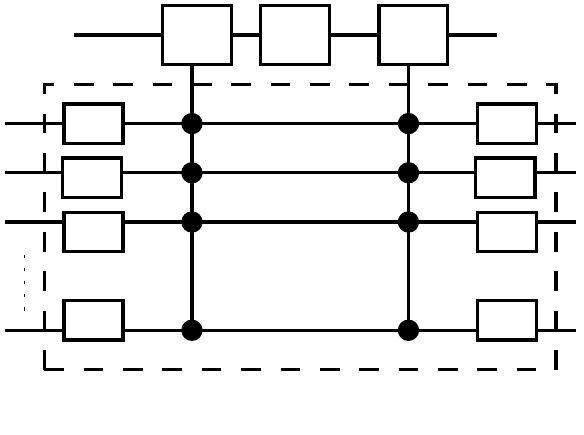_t}
\caption{Circuit for $\C'$ (left) and $S_0$ gate in $\C'$ (right)}
\label{fig:erqp-half}
\end{figure*}

We illustrate an application of the above theorem to obtain an error-free
circuit for an $\ERQP_{1/2}$ language $L$ (see Appendix for an explicit proof). Consider circuit $C$ in
Figure~\ref{fig:erqp-half}(a) which can identify if $x \in L$ with one-sided
error $0.5$. As is typical in quantum circuits, in this example only one of the output qubits of
the circuit is measured in the standard basis ($P_E = \ketbra{0}{0} \otimes I$
and $\Proj = \langle P_E, 1-P_E \rangle$); therefore, if $x \not\in L$, then
the output qubit is never observed in state $\ket{0}$ and if $x \in L$, then the
output qubit is observed in states $\ket{0}$ or $\ket{1}$ with equal
probability. The circuit $C'$ shown in Figure~\ref{fig:erqp-half}(b) shows how to
remove the probability of error; the same output qubit is measured in the
standard basis for outcome and some additional qubits in state $\ket{0}$ are
used as ancill\ae. Apart from calling $C$ and $C^\dagger$, $C'$ uses the
n-qubit Fanout gate $F_n$, a
conditional phase gate $S_0$ 
\footnote{$S_0
\ket{0 0 \ldots 0} = \I \ket{00\ldots 0}$ and for other states $S_0
\ket{x_1 x_2 \ldots x_k} = \ket{x_1 \ldots x_k}$ (illustrated in
Figure~\ref{fig:erqp-half}(c)).}
which changes phase of $\ket{00\ldots 0}$ by $\I$, and $P$
does the same to $\ket{1}$.

\subsection{Exact amplitude amplification
(Theorem~\ref{theorem:aa-two-sided})}

\begin{proof}[Proof of Theorem~\ref{theorem:aa-two-sided}]
Let $\Proj$
denote the two-outcome projective measurement operator used in the original
two-sided exact error circuit $C$. $C$ can be of two types depending on how
it accesses its input. Any input $x \in X$ can be accessed 
either through the input state $\ket{x}$ (along with ancill\ae\ initialized to $\ket{00\ldots}$, wlog.) or through an oracle gate $U_x:
\ket{x,b} \to \ket{x, b\oplus \Phi(x)}$ (for $b \in \{0,1\}$). If $C$ is of the
former type, then Theorem~\ref{theorem:aa-two-sided} is essentially same as
Theorem~\ref{theorem:eqp-ebqp}.

Next we focus on 
circuits with oracle
gates. Let $C^{U_x}$ denote this circuit
when given
$U_x$ as the oracle gate corresponding to an input $x \in X$. The input state to
$C^{U_x}$ can be taken to be $\ket{00\ldots 0}$, wlog.
The proof follows by applying Theorem~\ref{theorem:main-separable} on this 
collection of quantum systems: $\left\{ \llangle
\ket{00\ldots}, C^{U_x}, \Proj \rrangle ~:~ x \in X \right\}$.

Observe that this collection satisfies the conditions of
Lemma~\ref{lemma:uniform-bb-transform}. So, the corresponding $\BB$-transform
is uniform which implies that all the transformed circuits
for these quantum systems are identical, except for the calls to
$C$ and $C^\dagger$. Therefore, we can choose this transformed oracle circuit as
our required $C'$ of Theorem \ref{theorem:aa-two-sided}.
\end{proof}

\section{Conclusion}
Is there a classical method that can accurately decide the distribution of a random
variable $X$ among two given distributions based on multiple samples of $X$?
Probably no. 
On the
other hand, if the random variables come from a quantum source, we show that
quantum circuits exist that can do the same without any probability of error.
A quantum circuit, along with an input state and a measurement operator, can be
consider as a quantum source of samples drawn over the distribution of
the measurement outcomes.

The underlying technique is a generalization of quantum amplitude
amplification to two-sided error and for circuits without oracle gates. We used
our amplification technique to distinguish between two circuits, when used as a
black box, which has application in fault detection of quantum circuits. We also
defined a restricted version of quantum one-sided and two-sided bounded error
classes and used generalized amplification to show that those complexity
classes collapse to (error-free) quantum polynomial time complexity
class.

It would be interesting to investigate if this approach can
be used for ATPG with more than one fault models and for amplifying
standard bounded-error classes $\BQP$ and $\RQP$.

\appendix
\section{Proof of $\ERQP_{1/2} \subseteq \EQP$}

\begin{lemma}\label{lemma:rqp_0.5_in_EQP}
    If a language $L \in \ERQP_{1/2}$, then $L \in \EQP$.
\end{lemma}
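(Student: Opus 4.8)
The plan is to instantiate the general machinery of Theorem~\ref{theorem:main-separable} and Lemma~\ref{lemma:uniform-bb-transform} to the specific case $\delta = 0$, $\epsilon = 1/2$, and then verify that the resulting transformation can be realized by a polynomial-size uniform circuit family so that membership in $\EQP$ actually follows. Since $L \in \ERQP_{1/2} = \EBQP_{0,1/2}$, there is a uniform family of polynomial-size circuits $\{C_n\}$, ancill\ae\ states $\ket{A_n}$, and projective operators $\{\Proj_n\}$ so that for each $x \in \{0,1\}^n$ the quantum system $\Q_x = \llangle \ket{x} \otimes \ket{A_n}, C_n, \Proj_n \rrangle$ has output distribution $\mu_0$ when $x \notin L$ and $\mu_{1/2}$ when $x \in L$. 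First I would form the collection $\S_n = \{\Q_x : x \in \{0,1\}^n\}$ and observe that it is $(0,1/2)$-separable, that its input states $\{\ket{x}\otimes\ket{A_n}\}$ are orthonormal (since the $\ket{x}$ are distinct standard-basis states), and that its projection operators are all the identical $\Proj_n$.

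Because $\epsilon = 1/2$ lies in the range $[1/4,1/2]$, the heavy recursive construction of Subsection~\ref{subsection:bb-small} is not needed; the Separability Lemma in this regime reduces to a single application of the optimal $\BB$-transform $\BB^*_{1/2}$ from Definition~\ref{defn:opt-bb-transform}, namely one optimal Grover iterate $G^*_{1/2} = C\,S_{\ket{\psi}}\,C^\dagger\,S_\Proj\,C$. By Table~\ref{table:delta}, at $p = 1/2$ the optimal phases are $\theta = \alpha = \pi/2$, giving relative increase $\Delta^*_{1/2} = 2$ and amplified probability $p' = 1$; this is exactly the role of the phase-$\I$ gates $S_0$ and $P$ mentioned in the illustration around Figure~\ref{fig:erqp-half}. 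Concretely, the transformation sends $\mu_{1/2} \mapsto \mu_1$ and (by Corollary~\ref{cor:opt-bb-transform}, using $\delta = 0 \Leftrightarrow \delta' = 0$) keeps $\mu_0 \mapsto \mu_0$. Thus a single round of $\BB^*_{1/2}$ already yields a $(0,1)$-separable collection, so the composite transform $\BB_\delta \circ \BB_2 \circ \BB_\epsilon$ of Theorem~\ref{theorem:main-separable} collapses to essentially this one iterate and produces transformed systems $\Q'_x$ whose outcome is never $E$ when $x \notin L$ and always $E$ when $x \in L$.

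The main work, and the step I expect to be the real obstacle, is confirming uniformity and polynomial size rather than the amplitude bookkeeping. I would invoke Lemma~\ref{lemma:uniform-bb-transform}: its two hypotheses --- identical projection operators and orthonormal (here in fact standard-basis) inputs --- are met, so the $S_{\ket{\psi}}$ gate admits the uniform implementation of Figure~\ref{fig:s-psi-uniform} using $n$ fresh ancill\ae, a fanout gate $F_n$, the unitary $U$ mapping $\ket{x}\otimes\ket{A_n} \mapsto \ket{v}$, and the phase gate $S_\theta$ acting only on $\ket{0^n}$. Here the crucial simplification is that the inputs are already standard-basis vectors, so $U$ can be taken to be (essentially) the identity on the input register together with the fixed preparation of $\ket{A_n}$, making the conditional phase uniformly computable with polynomially many elementary gates. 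Since $G^*_{1/2}$ makes only a constant number (here one) of calls to each of $C_n$ and $C_n^\dagger$, and $C_n$ is polynomial-size, the transformed circuit $C'_n$ is polynomial-size and uniform; likewise $\Proj'_n = \Proj_n \otimes \Proj_a$ and the ancilla state $\ket{A_n}\otimes\ket{\mathbf{0}}$ form uniform families. This places $L$ in $\EBQP_{0,1}$, i.e.\ in $\EQP$, which is exactly the claim; the delicate point throughout is that every $\epsilon$-dependent gate ($S_0$, $P$, and the phases in $S_\theta$) is a fixed constant gate at $\epsilon = 1/2$, so no instance-dependent non-uniformity creeps in.
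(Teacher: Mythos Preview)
Your proposal is correct and follows essentially the same approach as the paper: a single Grover iterate $G^*_{1/2}$ with phases $\theta=\alpha=\pi/2$, implemented uniformly via the fanout-based construction for $S_{\ket{\psi}}$, which is precisely the circuit $\C' = \A S_0 \A^{-1} P \A$ of Figure~\ref{fig:erqp-half}. The only difference is presentational: you invoke Theorem~\ref{theorem:main-separable}, Corollary~\ref{cor:opt-bb-transform}, and Lemma~\ref{lemma:uniform-bb-transform} abstractly (as the paper itself does in Theorem~\ref{theorem:eqp-ebqp}), whereas the appendix proof unrolls this machinery into an explicit state-by-state calculation verifying that the output is $\I\ket{x}\ket{0}\ket{\psi_0}$ when $x\notin L$ and $(\I-1)\ket{x}\ket{1}\ket{\psi_1}$ when $x\in L$.
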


\begin{proof}
We will assume that the algorithms end with
a measurement of a specified qubit in the computational basis -- this is
equivalent most other ways measurement strategies that are commonly applied.

Take any $L \in \ERQP_{1/2}$, and consider the corresponding circuit $C$
(illustrated in Figure~\ref{fig:erqp-half}(a)). Suppose $m$
denotes the number of ancill\ae\ qubits used by $C$, and $n$ denotes the length
of any input $x$, then $C$ acts on $\H^{\tensor n} \tensor \H^{\tensor m}$ and
its output is given by $\ket{\psi} = C
\ket{x}\ket{0^m}$. Without loss of generality, suppose that the first qubit is
specified for measurement, then the projective measurement operator applied is
$\ket{0}\bra{0} \tensor I$.

We will now a construct an \EQP\ circuit $\C'$ to decide the same language $L$. But
first note that, $\ket{\psi} = \ket{0}\ket{\psi_0} + \ket{1}\ket{\psi_1}$ and
that, if $x \not\in L$, $\braket{\psi_1}{\psi_1}=0$, and if $x\in L$,
$\braket{\psi_1}{\psi_1}=1/2$ ($=\braket{\psi_0}{\psi_0}$). The circuit is
constructed as 
$\C' = \A S_0 \A^{-1} P \A$ and described in Figure \ref{fig:erqp-half}(b).
$\C'$ acts on $\H^{\tensor n} \tensor \H^{\tensor n} \tensor \H^{\tensor m}$,
and we will denote the space as 3 registers $P,Q,R$, respectively, of $n,n,m$
qubits. The
gates will be labelled with the registers (as superscripts) they are applied on
in the following description.

Besides the circuit $C$, which will be used always on registers $QR$,
we will make frequent use of the {\em fanout} operator\cite{Durr1999}.
This, and the other components of $\C'$, are listed below.
\begin{itemize}
    \item The fanout operator effectively copies basis states from a control
	qubit to a target qubit. On two registers of $n$ qubits each, it works
	as $F_n\ket{a_1 \ldots a_n}\ket{b_1 \ldots b_n} =
	\ket{a_1 \ldots a_n}\ket{(b_1 \xor a_1) \ldots (b_n \xor a_n)}$.
	Note that, $F^\dagger_n = F_n$.
    \item $\A = (F_n^{PQ} \tensor I) \tensor (I \tensor C^{QR})$ 
    \item $P^{Q} = I - (1-\I)\ketbra{0}{0}$ is the phase gate $P$ applied on 
	the first qubit of register $Q$.
	Notice that, the first qubit of register $Q$ is the measurement qubit
	with respect to $C$.
    \item $S_0^{QR}=I-(1-\I)\ketbra{0^{n+m}}{0^{n+m}}$ which changes the phase
	of the basis state in which all qubits are in the
	state $\ket{0}$. Implementation of $S_0$ is shown in Figure
	\ref{fig:erqp-half}(c) -- it requires one additional qubit initialized to
	$\ket{0}$. However this qubit is in state $\ket{0}$ after application of
	this operator, so this qubit could be reused if required. This extra
	qubit has been left out in the description of $\C'$.
    \item The input to $\C'$ will be $\ket{x}\ket{0^{\tensor n}}\ket{0^{\tensor
	m}}$.
    \item We will measure the first qubit of register $Q$ in the standard basis
	at the end.
\end{itemize}

Next, we will describe the operation of $\C'$.
\begin{align*}
\C' \ket{x}\ket{0^n}\ket{0^m} 
= & C^{QR} \cdot F_n^{PQ} \cdot S_0^{QR} \cdot F_n^{PQ} \cdot C^{\dagger QR}
\cdot P^{Q} \cdot C^{QR} \cdot F_n^{PQ}
~~ \ket{x}\ket{0^n}\ket{0^m}\\
= & C^{QR} \cdot F_n^{PQ} \cdot S_0^{QR} \cdot F_n^{PQ} \cdot C^{\dagger QR}
\cdot P^{Q} \cdot C^{QR}
~~ \ket{x}\ket{x}\ket{0^n}\\
= & C^{QR} \cdot F_n^{PQ} \cdot S_0^{QR} \cdot F_n^{PQ} \cdot C^{\dagger QR}
\cdot P^{Q}
~~ \ket{x} \bigg(\ket{0}\ket{\psi_0} + \ket{1}\ket{\psi_1}\bigg)\\
= & C^{QR} \cdot F_n^{PQ} \cdot S_0^{QR} \cdot F_n^{PQ} \cdot C^{\dagger QR}
~~ \ket{x} \bigg(\ket{0}\ket{\psi_0} + \I\ket{1}\ket{\psi_1}\bigg)~~~~~(*)
\end{align*}

We will now simplify the remaining operator.
\begin{align*}
& C^{QR} \cdot F_n^{PQ} \cdot S_0^{QR} \cdot F_n^{PQ} \cdot C^{\dagger QR} \\
= & C^{QR} \cdot F_n^{PQ} \cdot \bigg(I-(1-\I)I^P\tensor\ketbra{0^{n+m}}{0^{n+m}}
\bigg) \cdot F_n^{PQ} \cdot C^{\dagger QR}\\
= & C^{QR} \cdot F_n^{PQ} \cdot \bigg(I-(1-\I)\sum_{\text{$n$-bit
$p$}}\ketbra{p,0^{n+m}}{p,0^{n+m}} \bigg) \cdot F_n^{PQ} \cdot C^{\dagger QR}\\
= & C^{QR} \cdot \bigg(I-(1-\I)\sum_{\text{$n$-bit
$p$}}F_n^{PQ} \ketbra{p,0^{n+m}}{p,0^{n+m}}F_n^{PQ}  \bigg) \cdot C^{\dagger QR}\\
= & C^{QR} \cdot \bigg(I-(1-\I)\sum_{\text{$n$-bit
$p$}} \ketbra{p,p,0^{m}}{p,p,0^{m}} \bigg) \cdot C^{\dagger QR}\\
= & I-(1-\I)\sum_{\text{$n$-bit
$p$}} \ketbra{p}{p} \tensor (C^{QR}\ketbra{p,0^{m}}{p,0^{m}} C^{\dagger QR}\\
\end{align*}

Substituting this simplification in $(*)$ above,
\begin{align*}
& \C' \ket{x}\ket{0^n}\ket{0^m} \\
= & \bigg( I-(1-\I)\sum_{\text{$n$-bit
$p$}} \ketbra{p}{p} \tensor (C^{QR}\ketbra{p,0^{m}}{p,0^{m}} C^{\dagger QR}
\bigg)
~~ \ket{x} \bigg(\ket{0}\ket{\psi_0} + \I\ket{1}\ket{\psi_1}\bigg)\\
= & \ket{x} \bigg(\ket{0}\ket{\psi_0} + \I \ket{1}\ket{\psi_1}\bigg) -\\
  & (1-\I)\sum_{\text{$n$-bit
$p$}} \ket{p}\braket{p}{x} \tensor \bigg(C^{QR}\ketbra{p,0^{m}}{p,0^{m}} C^{\dagger
QR} \bigg)
~~ \bigg(\ket{0}\ket{\psi_0} + \I\ket{1}\ket{\psi_1}\bigg) \\
= & \ket{x} \bigg(\ket{0}\ket{\psi_0} + \I \ket{1}\ket{\psi_1}\bigg) -
   (1-\I) \ket{x} \tensor \bigg(C^{QR}\ketbra{x,0^{m}}{x,0^{m}} C^{\dagger
QR} \bigg)
~~ \bigg(\ket{0}\ket{\psi_0} + \I \ket{1}\ket{\psi_1}\bigg) \\
= & \ket{x} \bigg( \big(\ket{0}\ket{\psi_0} + \I \ket{1}\ket{\psi_1}\big) -
(1-\I) \big(\ket{0}\ket{\psi_0} + \ket{1}\ket{\psi_1}\big)
\big(\bra{0}\bra{\psi_0} + \bra{1}\bra{\psi_1}\big) 
~~ \big(\ket{0}\ket{\psi_0} + \I \ket{1}\ket{\psi_1}\big)\bigg)\\
= & \ket{x} \bigg( \big(\ket{0}\ket{\psi_0} + \I \ket{1}\ket{\psi_1}\big) -
(1-\I) \big(\ket{0}\ket{\psi_0} + \ket{1}\ket{\psi_1}\big)
\big( \braket{\psi_0}{\psi_0} + \I\braket{\psi_1}{\psi_1} \big) \bigg)\\
= & \ket{x} \bigg( \big(1 - (1-\I)K\big) \ket{0}\ket{\psi_0} + \big(\I -
(1-\I)K\big) \ket{1}\ket{\psi_1} \bigg)~~\text{where,
}K=\braket{\psi_0}{\psi_0} + \I\braket{\psi_1}{\psi_1}\\
= & \left\{ 
    \begin{array}{ll}
	\I\ket{x}\ket{0}\ket{\psi_0} & \text{ if, } x \not\in L \text{ i.e., }
	\braket{\psi_1}{\psi_1}=0,~\braket{\psi_0}{\psi_0}=1\\
	(\I-1)\ket{x}\ket{1}\ket{\psi_1} & \text{ if, } x \in L \text{ i.e., }
	\braket{\psi_1}{\psi_1}=\braket{\psi_0}{\psi_0}=1/2
    \end{array}
\right.
\end{align*}

Measuring the first qubit of register $Q$ therefore shows $\ket{1}$ if and only
if $x \in L$.

\end{proof}

\section{Optimal values for Grover iterator}
Let $c$ denote $\big( e^{\I\theta} + e^{\I\alpha} - 1 +
(1-e^{\I\alpha})(1-e^{\I\theta})p \big)$. 
Then, $c^* = -(1-p) + 2(1-p)e^{-\I\theta} + pe^{-2\I\theta}$.
Therefore, if $p > 0$, then $\Delta = cc^*$ which we will compute below.
\begin{proof}[Computing $\Delta$]
\begin{align*}
       & \Delta = c c^* \\
       & = (1-p)^2 - 2(1-p)^2 e^{-\I\theta} - p(1-p)e^{-2\I\theta}\\
       & - 2(1-p)^2 e^{\I\theta} + 4(1-p)^2 + 2p(1-p)e^{-\I\theta}\\
       & -p(1-p)e^{2\I\theta} + 2p(1-p)e^{\I\theta} + p^2\\
       & = [(1-p)^2 + 4(1-p)^2 + p^2] + (e^{-\I\theta}+e^{\I\theta})[2p(1-p)-2(1-p)^2] 
    - (e^{-2\I\theta}+e^{2\I\theta}) p(1-p) \\
       & = 6p^2 - 10p + 5 + 4(1-p)(2p-1)\cos\theta - 2p\cos 2\theta + 2p^2\cos
    2\theta\\
       & = (-10 - 2\cos 2\theta)p + (6 + 2\cos 2\theta)p^2 + (\sin^2\theta +
    \cos^2\theta) + 4 + 4(1-p)(2p-1)\cos\theta\\
       & = (-8 - 4\cos^2\theta)p + (4+4\cos^2\theta)p^2 + \sin^2\theta +
    \cos^2\theta + 4 + 4(1-p)(2p-1)\cos\theta\\
       & = \sin^2\theta + (4p^2 -4p + 1)\cos^2\theta + 4 + 4p^2 - 8p +
    4(1-p)(2p-1)\cos\theta\\
       & = \sin^2\theta + (2p-1)^2\cos^2\theta + 4(1-p)^2 +
    4(1-p)(2p-1)\cos\theta\\
       & = [(2p-1)\cos\theta + 2(1-p)]^2 + \sin^2\theta
\end{align*}
\end{proof}

\ckinduction*

\begin{proof}
We will give a quick sketch of the proof by induction.

For $k=1$, $C_1 = G^*_\epsilon = C S_{\ket{\psi}} C^\dagger S_\Proj C = \Q C$
so the claim holds for the base case.

Now, suppose that the claim holds for some $1 \le j < k$. Before discussing the
induction case, note that $(\Q^\dagger)^t = (S_\Proj C S_{\ket{\psi}}
C^\dagger)^t = S_\Proj \cdot \Q^{t-1} \cdot (C S_{\ket{\psi}}
C^\dagger)$ for any $t$.

Then, $C_{j+1} = G^*_{\epsilon_j}(C_j,
\ket{\psi}, \Proj) = C_j S_{\ket{\psi}} C_j^\dagger S_\Proj C_j$ which, using
the induction hypothesis, is $\Q^{(3^j-1)/2} C \cdot S_{\ket{\psi}} \cdot
C^\dagger (\Q^{\dagger})^{(3^j-1)/2} \cdot S_\Proj \Q^{(3^j-1)/2} C = $ (using the
expression for $(\Q^\dagger)^t$ above) $\Q^{(3^j-1)/2 + 1 + (3^j-1)/2 - 1 + 1
+ (3^j-1)/2} C = \Q^{(3^{j+1}-1)/2}C$.
\end{proof}

\section{Optimum initial state for distinguishing two circuits}
Recall that $| \braket{\psi_1}{\psi_2} | = |
\braket{\phi|C_1^\dagger C_2}{\phi} |$. Denoting $C_1^\dagger C_2$ by $S$, we
would like to minimize $| \braket{\phi|S}{\phi} |$ over all possible pure
state $\ket{\phi}$. Suppose the eigenvalues of $S$ are $e^{i\theta_1},
\ldots$ with corresponding eigenvectors $\ket{v_1}, \ldots$. Using a recent result~\cite{2015BeraATPG}, the
maximum value of $\epsilon$ is obtained by solving the optimization problem
$$\min f(\theta_1, \ldots) = \bigg(\sum_j c_j^2 + \sum_{j \not= k} c_j c_k \cos(\theta_j -
\theta_k) \bigg), ~~~\mbox{ where, } \sum_j c_j=1,~~~~ 0 \le c_j \le 1$$

Suppose $f_{OPT}$ denotes the optimal value above and $c_1, \ldots $ denote the
corresponding solution. Then, the optimal $\epsilon$ is $1-f_{OPT}^2$ and $\ket{\phi}$ can be set to $\sum_j \sqrt{c_j} \ket{v_j}$.

\end{document}